\documentclass[lettersize,journal]{IEEEtran}
\usepackage{amsmath,amsfonts}
\usepackage[ruled]{algorithm2e}
\usepackage{array}
\usepackage[caption=false,font=normalsize,labelfont=sf,textfont=sf]{subfig}
\usepackage{textcomp}
\usepackage{stfloats}
\usepackage{url}
\usepackage{amssymb}
\usepackage{verbatim}
\usepackage{color}
\usepackage{soul}
\usepackage{graphicx}
\usepackage{cuted}
\usepackage{bm} 
\usepackage{mathrsfs}
\usepackage{multirow}
\usepackage{color}
\usepackage{cite}
\usepackage{diagbox}

\usepackage{amsthm}

\newtheorem{remark}{Remark}
\newtheorem{proposition}{Proposition}

\begin{document}
\title{Cyclic Delay-Doppler Shift: A Simple Transmit Diversity Technique for Ultra-Reliable Communications in Doubly Selective Channels}
\author{Haoran Yin, \textit{Graduate Student Member, IEEE}, Yu Zhou,  Yanqun Tang,  Di Zhang, \textit{Senior Member, IEEE}, Chi Zhang, Xizhang Wei, Jiaojiao Xiong,  Fan Liu, \textit{Senior Member, IEEE},   Marwa Chafii, \textit{Senior Member, IEEE},  and Mérouane Debbah, \textit{Fellow, IEEE}

\thanks{
		This work was supported in part by the Shenzhen Science and Technology Major Project under Grant KJZD20240903102000001 and in part by the Science and Technology Planning Project of Key Laboratory of Advanced IntelliSense Technology, Guangdong Science and Technology Department under Grant 2023B1212060024. An earlier version of this paper was presented in part at the IEEE International Conference on Acoustics, Speech and Signal Processing (ICASSP) Workshops 2023 \cite{bb23.3.6.1}. (\textit{Corresponding author: Yanqun Tang.})

		Haoran Yin, Yu Zhou, Yanqun Tang, and Jiaojiao Xiong are with the School of Electronics and Communication Engineering, Sun Yat-sen University, Shenzhen, China, and also with the Guangdong Provincial Key Laboratory of Sea-Air-Space Communication, Shenzhen, China (e-mail: \{yinhr6@mail2, zhouy633@mail2, tangyq8@mail, xiongjj7@mail2\}.sysu.edu.cn); 
		
		Chi Zhang and Xizhang Wei are with the School of Electronics and Communication Engineering, Sun Yat-sen University, Shenzhen, China (e-mail: \{zhangch397@mail2, weixzh7@mail\}.sysu.edu.cn).
		 
		 Di Zhang is with the School of Intelligent Systems Engineering, Sun Yat-sen University, Shenzhen, China (e-mail: zhangd263@mail.sysu.edu.cn).
		 
 		 Fan Liu is with the National Mobile Communications Research Laboratory, Southeast University, Nanjing, China (e-mail: fan.liu@seu.edu.cn).
		 
		 Marwa Chafii is with the Wireless Research Lab,
		 Engineering Division, New York University (NYU) Abu Dhabi, Abu Dhabi, UAE, and also with NYU WIRELESS, NYU Tandon School of Engineering, New York, USA (e-mail: marwa.chafii@nyu.edu).
		 
		 Mérouane Debbah is with the Electrical Engineering and Computer Science,
		 Khalifa University of Science and Technology, Abu Dhabi, UAE (e-mail: merouane.debbah@ku.ac.ae).		 
	}  
}

\maketitle

\begin{abstract}
	Affine frequency division multiplexing (AFDM) and orthogonal time frequency space (OTFS) are two promising advanced waveforms proposed for reliable communications in high-mobility scenarios.  In this paper, we introduce a simple transmit diversity technique, termed cyclic delay-Doppler shift (CDDS), for these two advanced waveforms to achieve ultra-reliable communications in doubly selective channels (DSCs). Two simple CDDS schemes, named modulation-domain CDDS (MD-CDDS) and time-domain CDDS (TD-CDDS), are proposed, which perform CDDS in advance at the transmitter before and after the modulation, respectively. We demonstrate that both of the two proposed CDDS schemes can be implemented efficiently and flexibly by multiplying the transmit vector with a well-designed precoding matrix, which is nothing but a sparse phase-compensated permutation matrix. Moreover, we theoretically and numerically prove that CDDS can provide MIMO-AFDM and MIMO-OTFS with optimal transmit diversity gain when a proper CDDS step is adopted. Compared to the conventional transmit diversity techniques, the proposed CDDS scheme enjoys the advantages of lower channel estimation overhead, implementation complexity, and signal processing latency, making it particularly suitable for ultra-reliable communications in high-mobility scenarios.
\end{abstract}

\begin{IEEEkeywords}
CDDS, transmit diversity, MIMO-AFDM, MIMO-OTFS, doubly selective channels.
\end{IEEEkeywords}

\section{Introduction}
The next-generation wireless networks (NGWNs) are conceived to support ultra-reliable, high-efficiency, and low-latency communication in high-mobility scenarios, including vehicle-to-vehicle (V2V), unmanned aerial vehicles (UAVs), and satellite networks \cite{bb25.6.10.1}.  The underlying multipath wireless channels in these advanced applications exhibit prominent Doppler effect due to their high-mobility nature, resulting in both time selectivity and frequency selectivity. In particular, Doppler shifts greatly impair the orthogonality among the subcarriers in conventional orthogonal frequency division multiplexing (OFDM) waveform at the receiver \cite{bb23.1.3.2}. Considering that higher frequency bands will be adopted in NGWNs and Doppler shifts scale linearly with the carrier frequency, designing a new waveform to better adapt to the doubly selective channels (DSCs) is of extreme importance. 
 
Against this background, one of the most promising alternatives is delay-Doppler (DD) waveforms, such as orthogonal time frequency space (OTFS) \cite{ bb23.2.7.1, bb24.08.21.2}, Zak-OTFS \cite{bb25.01.21.2}, and  orthogonal delay-Doppler division multiplexing (ODDM) \cite{bb24.03.15.2}. The core principle of OTFS is modulating information symbols in the DD domain by exploring the two-dimensional \emph{symplectic finite Fourier transform} (SFFT). Since the DD domain channel representation of the DSCs is much sparser and more stationary than the conventional time-frequency (TF) domain in OFDM, OTFS is resilient to the high DD shifts and is shown to significantly outperform OFDM in terms of bit error ratio (BER) \cite{bb23.2.12.1, 23.11.18.2, bb23.2.10.2}. 

Recently, another promising candidate named affine frequency division multiplexing (AFDM) was proposed, attracting substantial attention \cite{bb23.1.3.4,bb25.01.08.2,bb24.9.08.113,bb24.9.08.133, bb24.9.08.143, bb24.9.08.103, bb24.9.08.144}.  Information symbols in AFDM are multiplexed on a set of orthogonal chirps via the inverse \emph{discrete affine Fourier transform} (DAFT) \cite{ bb23.1.3.6}, which is characterized by two fundamental parameters that render AFDM compelling flexibility. By appropriately tuning the chirp slopes of the chirp subcarriers according to the Doppler profile of the DSC, AFDM manages to separate all propagation paths with distinct delay or Doppler shifts in the underlying one-dimensional DAFT domain, which guarantees optimal diversity gain in DSCs. In particular, the DAFT-domain channel representation in AFDM exhibits similar stability, sparsity, and separability as the DD-domain counterpart in OTFS, facilitating the channel estimation \cite{bb23.2.10.7,bb25.01.25.1, bb25.01.25.13} and signal detection \cite{bb25.01.25.3, bb24.9.23.2, bb24.9.23.201} of AFDM in DSCs. Extensive research has been conducted to explore the potential of AFDM in low-complexity modulation \cite{bb24.9.08.3, bb24.9.08.2},
  pulse shaping \cite{bb24.9.15.1}, multiple access \cite{bb24.9.08.1, bb24.9.08.10, bb25.01.08.2}, and integrated sensing and communications (ISAC) \cite{bb24.9.08.100, bb24.9.08.42, bb24.9.08.4, bb24.03.15.5, bb24.9.08.105, bb25.10.14.1, bb25.10.13.3}, and physical-layer security \cite{bb25.10.11.1,bb25.10.11.3,bb25.10.11.2,bb25.10.11.4}.

The diversity order of a waveform is a key indicator of its capacity to support reliable communications. It is mathematically defined as the negative slope of the curve of BER versus signal-to-noise ratio (SNR) on a logarithmic scale, characterizing the rate at which the BER decreases as SNR increases. One effective approach to increase the diversity order is to explore the space dimension by leveraging the multiple-input multiple-output (MIMO) techniques, which will continue to play a crucial role in NGWNs. Extensive works on MIMO-AFDM and MIMO-OTFS have been conducted, including channel estimation \cite{bb23.2.10.7, bb25.5.20.1}, signal detection \cite{bb25.10.04.3,bb25.10.13.1, bb25.10.04.2}, and ISAC \cite{bb24.9.08.104,bb25.10.04.1}. In particular, the authors in \cite{bb23.2.12.1} and \cite{bb23.2.10.7} demonstrated that optimal receive diversity gain equivalent to the number of receive antennas (RAs) can be straightforwardly obtained in MIMO-AFDM and MIMO-OTFS by simply applying joint-receive-antenna signal detection. However, compared to the receive diversity, the acquisition  of transmit diversity in MIMO-AFDM and MIMO-OTFS is of great challenge.

The classic Alamouti space-time coding (STC) transmit diversity technique initially proposed for single-carrier modulation in \cite{bb23.2.25.1} received considerable extensions in the last two decades. It encodes multiple symbols across multiple transmit antennas (TAs) over multiple time slots, leveraging time diversity and ensuring simple linear decoding in MIMO systems. Furthermore, the adaptation of Alamouti STC to MIMO-OTFS systems with two TAs was first investigated in \cite{bb23.2.10.8}, where the transmit diversity gain was shown to be two. However, it requires strict cooperation between two consecutive frames and assumes that the DSC remains unchanged, which is impractical in high-mobility scenarios. To address this limitation, the authors in \cite{bb23.2.10.9} divided the OTFS frame along the delay domain into two equivalent parts so that the Alamouti STC can be performed within a single transmit frame. Nevertheless, the extra guard symbols needed for frame division incur severe spectral efficiency degradation. Similar challenges arise when applying Alamouti STC to MIMO-AFDM, indicating that Alamouti STC may not be an appropriate choice for MIMO-AFDM and MIMO-OTFS to acquire transmit diversity gain in DSCs.

In addition to Alamouti STC, the cyclic delay diversity (CDD) \cite{bb25.5.26.3, bb23.2.26.1} and discontinuous Doppler diversity (DDoD) \cite{bb23.2.25.3, bb23.2.25.5} are two prominent transmit diversity schemes. Specifically, CDD applies fixed cyclic delay shifts to the transmitted signal at multiple TAs to obtain delay diversity, whereas
DDoD performs fixed Doppler shifts to obtain the Doppler diversity. Despite the attractive characteristics of low complexity and high effectiveness of CDD and DDoD, most of the works of CDD and DDoD mainly focused on MIMO-OFDM in frequency-selective channels, and their design principles and conclusions cannot be directly extended to advanced waveforms in DSCs. Insightfully, the authors in \cite{bb23.2.10.10} preliminarily combined CDD with MIMO-OTFS, achieving notable improvement in BER compared to single-input single-output (SISO) OTFS thanks to OTFS’s inherent capability to harvest delay-domain diversity. Although CDD and DoDD cannot guarantee optimal diversity gain equivalent to the number of TAs for MIMO-AFDM and MIMO-OTFS, their natural compatibility with these two waveforms shows great potential in enabling ultra-reliable communications in DSCs.

In this paper, motivated by the inborn capability of AFDM and OTFS to gather DD-domain diversity, we introduce a novel transmit diversity scheme, termed cyclic delay-Doppler shift (CDDS), for MIMO-AFDM and MIMO-OTFS to achieve ultra-reliable communications in DSCs. The core idea of CDDS is performing dedicated cyclic DD shifts in advance at different TAs according to the DD profile of the DSC to effectively augment the number of propagation paths of the wireless channel. Specifically, we propose two types of CDDS, which can provide optimal transmit diversity gain for MIMO-AFDM and MIMO-OTFS with appropriate CDDS configurations. We demonstrate with rigorous derivation that both CDDS schemes can be implemented by simply multiplying the transmit vector with a well-designed CDDS precoding matrix, which is nothing but a sparse, phase-compensated permutation matrix. Moreover, the multiple-TA system with CDDS is equivalent to a single-TA system, which significantly reduces the channel estimation overhead. Additionally, CDDS is performed within a single transmit frame and requires no additional processing at the receiver, which brings in the advantages of lower implementation complexity and signal processing latency. Our contributions are summarized as follows.
\begin{itemize} 
	\item[$\bullet$]
	We propose two types of CDDS, namely modulation-domain CDDS (MD-CDDS) and time-domain CDDS (TD-CDDS), which conduct CDDS before and after modulation, respectively. In particular, we derive the corresponding precoding matrices, noting that the TD-CDDS precoding matrix is suitable for all waveforms, whereas the MD-CDDS precoding matrix should be customized according to the input-output relationship (IOR) in the modulation domain of the specific waveform. Moreover, we unveil with a graphic interpretation that the CDDS operation can be considered as effectively shifting the channel delay and Doppler shifts, making the multiple TA system with CDDS equivalent to a single TA system with an increased number of propagation paths.
\end{itemize}
\begin{itemize} 
	\item[$\bullet$]
	We provide a comprehensive evaluation of the proposed CDDS schemes. Specifically, we prove that when all CDDS-shifted DD profiles remain non-overlapping, optimal transmit diversity gain is guaranteed for  MIMO-AFDM and MIMO-OTFS. Additionally, we make a full comparison between CDDS and conventional Alamouti schemes in terms of channel estimation overhead, showing that CDDS requires much fewer guard symbols, especially when the number of TAs is large. Based on that, we reveal that there is a fundamental tradeoff between the achieved transmit diversity and the channel estimation overhead, and CDDS can achieve a more compelling balance than conventional CDD and DDoD schemes thanks to its extra degree of freedom that enables more flexible CDDS step selection.
\end{itemize}
\begin{itemize} 
	\item[$\bullet$]
	Finally, extensive Monte Carlo simulations are conducted to investigate the performance of the two proposed CDDS schemes in terms of BER in DSCs with both integer and fractional Doppler shifts. It is shown that both CDDS schemes effectively provide MIMO-AFDM and MIMO-OTFS systems with optimal transmit diversity gain and advantage on lower channel estimation overhead over conventional transmit diversity schemes. Moreover, the performance of CDDS is robust to the adopted pulse shaping and imperfect channel state information (CSI).
\end{itemize}

The rest of this paper is organized as follows. Sec. \ref{Sec2} reviews the basic concepts of MIMO-AFDM and MIMO-OTFS systems, which lays the foundations for the demonstration of CDDS in Sec. \ref{Sec3}.  Sec. \ref{Sec4} provides a comprehensive investigation on the performance of CDDS, which is verified by the numerical results presented in Sec. \ref{Sec5}. Finally, Sec. \ref{Sec6} concludes this paper.

\textit{Notations:} Upper and lower case boldface letters denote matrices and column vectors, respectively; $\mathbb{C}$ denotes the set of complex numbers, and $\mathbb{C}^{M \times N}$ denotes the set of all $M \times N$ matrices with complex entries;  $\mathbf{I}_{N}$ denotes the identity matrix
of size $N \times N$;
$\mathbf{a} \sim \mathcal{C} \mathcal{N}\left(\mathbf{0}, N_{0} \mathbf{I}_{N}\right)$ means that $\mathbf{a}$ follows the complex Gaussian distribution with
zero mean and covariance $N_{0} \mathbf{I}_{N}$;  $\operatorname{diag}(\cdot)$ denotes a square diagonal matrix with the elements of the input vector on the main diagonal; 
$(\cdot)^{*}$,
$(\cdot)^{H}$, $(\cdot)^{T}$, and $\left\| \cdot \right\|$ denote the conjugate,
transpose, the Hermitian and the Euclidean norm  operations; $\lvert\cdot\rvert$ denotes the absolute value of a
complex scalar; $(\cdot)_{N}$ denotes the modulus operation with respect to $N$;
$\mathbb{E}[\cdot]$ denotes  the expectation; $Q(\cdot)$ denotes the tail distribution
function of the standard normal distribution;  $\text{vec}(\cdot)$ and $\text{vec}^{-1}(\cdot)$ denote the column-wise vectorization and its reverse operation, respectively; $\operatorname{card}(\cdot)$ denotes the cardinality.

\section{Fundamental of MIMO-AFDM and MIMO-OTFS}
\label{Sec2}
In this section, the basic concepts of MIMO-AFDM and MIMO-OTFS from \cite{bb23.2.7.1, bb23.1.3.4,bb25.01.08.2, bb24.08.21.2} are briefly reviewed. Fig.~\ref{2-1} shows the general modulation/demodulation block diagrams of MIMO-AFDM and MIMO-OTFS systems. Let $\mathbf{x}_{i}\ ( i=1,\dots,N_{t})$ and $\mathbf{y}_{j} \ ( j=1,\dots,N_{r})$ denote the modulation-domain transmitted vector of the $i$th TA and the modulation-domain received vector of the $j$th RA, respectively, where $N_{t}$ and $N_{r}$ represent the number of TAs and RAs, respectively. The modulation domain refers to the domain in which symbols (data, pilot, and guard symbols) are multiplexed, namely the DAFT domain for AFDM and the DD domain for OTFS. Correspondingly, $\mathbf{s}_{i}  \ (i=1,\dots,N_{t})$ and $\mathbf{d}_{j} \ ( j=1,\dots,N_{r})$ represent the time-domain transmitted vector of the $i$th transmit antenna and the time-domain received vector of the $j$th receive antenna, respectively.

 \begin{figure}[tbp]	
	\centering
	\includegraphics[width=0.48\textwidth,height=0.106\textwidth]{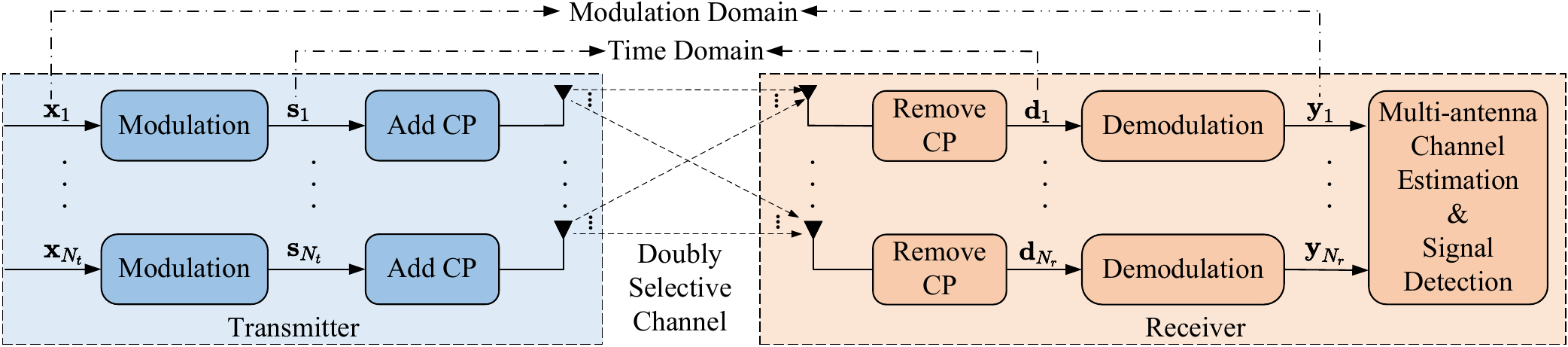}
	\caption{General modulation/demodulation block diagrams of baseband MIMO-AFDM and MIMO-OTFS systems.}
	\label{2-1}
\end{figure}

\subsection{AFDM Modulation and Demodulation}
Let $\mathbf{x}_{\text{AFDM}} \in \mathbb{A}^{N \times 1}$ denote a vector of $N$ quadrature amplitude modulation (QAM) symbols in the DAFT domain, where $N$ denotes the number of chirp subcarriers, $\mathbb{A}$ represents the modulation alphabet. At the transmitter, $N$-point inverse DAFT is firstly performed on $\mathbf{x}_{\text{AFDM}}$ to convert it into the time-domain signal $\mathbf{s}_{\text{AFDM}}$ as
\begin{equation}
	s_{\text{AFDM}}[n]=\frac{1}{\sqrt{N}} \sum_{m=0}^{N-1} x_{\text{AFDM}}[m] e^{j 2 \pi\left(c_{2} m^{2}+\frac{1}{N} m n+c_{1} n^{2}\right)},
	\label{eq2.12.5}
\end{equation}
where $n$ denotes the discrete-time domain, $m=0, \cdots, N-1$ denotes the index of AFDM chirp subcarrier,  and $c_{1}$ and $c_{2}$ are two fundamental parameters that determine the chirp slope and the initial phase of all AFDM chirp subcarriers, respectively. Equation (\ref{eq2.12.5}) can be written in matrix form as
\begin{equation}
	\mathbf{s}_{\text{AFDM}} =  \boldsymbol{\Lambda}_{c_{1}}^{H} \mathbf{F}^{H} \boldsymbol{\Lambda}_{c_{2}}^{H} \mathbf{x}_{\text{AFDM}} = \mathbf{A}^{H} \mathbf{x}_{\text{AFDM}}
	\label{eq2-2}
\end{equation}
where $\mathbf{A} = \boldsymbol{\Lambda}_{c_{2}} \mathbf{F} \boldsymbol{\Lambda}_{c_{1}}\in\mathbb{C}^{N\times N}$ represents the DAFT matrix,  $\mathbf{F}$ is the  discrete Fourier transform (DFT) matrix with entries  $e^{-j 2 \pi m n / N} / \sqrt{N}$,  $\boldsymbol{\Lambda}_{c}\triangleq\operatorname{diag}\left(e^{-j 2 \pi c q^{2}}, q=0,1, \ldots,   N-1\right)$ is a diagonal matrix. Before transmitting $\mathbf{s}_{\text{AFDM}}$ into the DSC, a chirp-periodic prefix (CPP) given by
\begin{align}
	s_{\text{AFDM}}[n] = s_{\text{AFDM}}[n+N]e^{-j 2 \pi c_{1}\left(N^{2}+2Nn \right)}, 
	\label{eq24.01.09.1}
\end{align}
$n=-L_{\text{CPP}}, \dots, -1$, is appended to $\mathbf{s}_{\text{AFDM}}$ to combat the multipath effect of the DSCs, where $L_{\text{CPP}}$ is the length of the CPP and should be set larger than or equal to the maximum delay spread of the channel. In particular, the CPP simplifies to a conventional cyclic prefix (CP) in OFDM when $2Nc_{1}$ is an integer and $N$ is even, as is considered throughout this paper.

After appending a CP, $\mathbf{s}_{\text{AFDM}}$ is transmitted into the DSC, which can be represented in the DD domain with delay $\tau$ and Doppler $\nu$ as
\begin{equation}
	h(\tau, \nu)=\sum_{i=1}^{P} h_{i} \delta\left(\tau-l_{i}\Delta t\right) \delta\left(\nu-k_{i}\Delta f\right),
	\label{eq2.12.2}
\end{equation}
where $P$ is the number of propagation paths, $h_{i}$ is the channel gain of the $i$th path,
$l_{i} \in [0, l_{\max}]$ and $k_{i} \in [-k_{\max}, k_{\max}]$ are assumed to be integers and represent normalized delay and Doppler shifts, respectively, with $l_{\max}$ and $k_{\max}$ denoting the maximum normalized delay and Doppler shifts, respectively. Moreover, $\Delta t$ and $\Delta f$ represent the Nyquist sampling interval and the chirp subcarrier spacing of AFDM, respectively, satisfying $\Delta f = \frac{1}{N\Delta t}$.

At the receiver, the CP component of the received time-domain signal is first removed, yielding the CP-free time-domain signal $\mathbf{d}_{\text{AFDM}}$ as
\begin{equation}
	d_{\text{AFDM}}[n]=\sum_{i=1}^{P} h_{i} e^{j  \frac{2 \pi}{N}k_{i}n} s_{\text{AFDM}}[(n-l_{i})_{N}]+\tilde{w}[n],
	\label{eq2-5}
\end{equation}
where  $\tilde{w} \sim \mathcal{C} \mathcal{N}\left(0, N_{0}\right)$ represents the additive white Gaussian noise (AWGN). Equation (\ref{eq2-5}) can be expressed in matrix form as 
\begin{equation}
	\mathbf{d}_{\text{AFDM}} = \sum_{i=1}^{P} h_{i} \mathbf{\tilde{H}}_{i}  \mathbf{s}_{\text{AFDM}}  + \mathbf{w}= \mathbf{\tilde{H}} \mathbf{s}_{\text{AFDM}} + \mathbf{\tilde{w}},  
	\label{eq2.12.3}
\end{equation}
where $\mathbf{w} \sim \mathcal{C} \mathcal{N}\left(\mathbf{0}, N_{0} \mathbf{I}_{N}\right)$ is the time-domain noise vector, $\mathbf{\tilde{H}}=\sum_{i=1}^{P}h_{i}  \mathbf{\tilde{H}}_{i}\in\mathbb{C}^{N\times N}$ denotes the time-domain channel matrix, $\mathbf{\tilde{H}}_{i}=  \boldsymbol{\Delta}_{N}^{k_{i}} \boldsymbol{\Pi}_{N}^{l_{i}}$ represents the time-domain subchannel matrix of the $i$th path, $\boldsymbol{\Pi}_{N}$ denotes the $N \times N$ forward cyclic-shift matrix given by
\begin{equation}
	\boldsymbol{\Pi}_{N}=\left[\begin{array}{cccc}
		0 & \cdots & 0 & 1 \\
		1 & \cdots & 0 & 0 \\
		\vdots & \ddots & \ddots & \vdots \\
		0 & \cdots & 1 & 0
	\end{array}\right]_{N \times N},
	\label{eq202505081}
\end{equation}
$\boldsymbol{\Pi}_{N}^{l_{i}}$ models the $l_{i}$ delay shift, while the digital frequency shift matrix $\boldsymbol{\Delta}_{N}^{k_{i}} \triangleq \operatorname{diag}\left(e^{j \frac{2 \pi}{N} k_{i} n}, n=0,1, \ldots, N-1\right)$ models the $k_{i}$ Doppler shift of the channels. Then 
$N$-point DAFT is implemented on $\mathbf{d}_{\text{AFDM}}$ to transform it to the DAFT-domain signal $\mathbf{y}_{\text{AFDM}}$ as
\begin{equation}
	y_{\text{AFDM}}[m]=\frac{1}{\sqrt{N}} \sum_{n=0}^{N-1} d_{\text{AFDM}}[n] e^{-j 2 \pi\left(c_{2} m^{2}+\frac{1}{N} m n+c_{1} n^{2}\right)},
	\label{eq2025.5.8.2}
\end{equation}
 $0 \leq m \leq N-1$, whose matrix form is given by
 \begin{equation}
 	\mathbf{y}_{\text{AFDM}} = \boldsymbol{\Lambda}_{c_{2}} \mathbf{F} \boldsymbol{\Lambda}_{c_{1}} \mathbf{d}_{\text{AFDM}} = \mathbf{A} \mathbf{d}_{\text{AFDM}}.
 	\label{eq2-8}
 \end{equation}
 Finally, the IOR of AFDM in the DAFT domain is given by \cite{bb23.1.3.4}
\begin{equation}
	\begin{aligned}
		y_{\text{AFDM}}[m]=&\sum_{i=1}^{P} h_{i} e^{j \frac{2 \pi}{N}\left(N c_{1} l_{i}^{2}-m^{\prime} l_{i}+N c_{2}\left(m'^{2}-m^{2}\right)\right)} x_{\text{AFDM}}[m^{\prime}] \\
		&\qquad\qquad\qquad+w[m], \ \ m^{\prime}=\left(m+\operatorname{ind}_{i}\right)_{N},
	\end{aligned}
	\label{eq2.12.6}
\end{equation}
with index indicator of the $i$th path $\operatorname{ind}_{i} \triangleq(2 N c_{1} l_{i}-k_{i})_{N}$, $w[m]$ being the DAFT-domain AWGN. Notably, Equation (\ref{eq2.12.6}) indicates that the DSC induces DATF-domain symbol spreading at the receiver. The matrix form of (\ref{eq2.12.6}) can be obtained by substituting (\ref{eq2-2}) and  (\ref{eq2.12.3}) into (\ref{eq2-8}) as
\begin{equation}
	\mathbf{y}_{\text{AFDM}} = \sum_{i=1}^{P} h_{i} \mathbf{H}_{i}^{\text{AFDM}} \mathbf{x}_{\text{AFDM}} + \mathbf{w} = \mathbf{H}_{\text{AFDM}}\mathbf{x}_{\text{AFDM}} + \mathbf{w},
	\label{eq2-12-2}
\end{equation}
where $\mathbf{H}_{i}^{\text{AFDM}} = \mathbf{A} \mathbf{\tilde{H}}_{i} \mathbf{A}^{H}$ denotes the DAFT-domain subchannel matrix of the $i$th path, $\mathbf{H}_{\text{AFDM}} = \sum_{i=1}^{P}h_{i} \mathbf{H}_{i}^{\text{AFDM}}$ is the DAFT-domain channel matrix, $\mathbf{w}\sim \mathcal{C} \mathcal{N} \left(\mathbf{0}, N_{0} \mathbf{I}_{N}\right)$ is the DAFT-domain noise vector.
\begin{remark} 
	\textup{It has been proven in \cite{bb23.1.3.4} (Theorem 1) that the diversity order of SISO-AFDM $\rho_{\text{SISO-AFDM}}=P$, i.e., SISO-AFDM can achieve optimal diversity in DSCs as long as
		\begin{equation} 
			c_{1} = \frac{2 (k_{\max }+k_{\text{space}})+1}{2N},
			\label{eq7-03-1}
		\end{equation}
		$c_{2}$ is set as either an arbitrary irrational number or a rational number sufficiently smaller than $\frac{1}{2N}$ (spacing factor $k_{\text{space}}$ is a non-negative integer and will be illustrated later in Sec. \ref{subsec4-2}), and 
		\begin{equation}
			N\geq \left(l_{\max }+1\right)\left(2 \left(k_{\max }+k_{\text{space}}\right)+1\right).
			\label{eq3-27}
		\end{equation} }
	\label{Remark1}
\end{remark} 

\subsection{OTFS Modulation and Demodulation}
Let $\mathbf{X}_{\text{OTFS}} \in \mathbb{A}^{K \times L}$ denote a matrix of $KL$ QAM symbols in the DD domain, where $L$ and $K$ denote the number of samples in the delay and Doppler dimensions, respectively. For ease of fair comparison between AFDM and OTFS, $KL$ is set to $N$ and the same Nyquist sampling interval is adopted, which guarantees that the two systems occupy the same TF resource. At the transmitter, $KL$ information symbols  $\mathbf{X}_{\text{OTFS}}$ are firstly mapped to TF domain  via inverse SFFT as
\begin{equation}
Q_{\text{OTFS}}[n, u]=\frac{1}{\sqrt{KL}} \sum_{k=0}^{K-1} \sum_{l=0}^{L-1} X_{\text{OTFS}}[k, l] e^{j 2 \pi\left(\frac{n k}{K}-\frac{u l}{L}\right)},
	\label{eq2.12.1}
\end{equation}
where $k=0,\dots,K-1$, $l=0,\dots,L-1$, $n=0,\dots,K-1$, and $u=0,\dots,L-1$ denote the indices of the Doppler, delay, time, and frequency domain, respectively. Then the TF symbols $\mathbf{Q}_{\text{OTFS}}$ are mapped to the time domain through Heisenberg transform as 
\begin{equation}
s_{\text{OTFS}}(t)=\sum_{n=0}^{K-1} \sum_{u=0}^{L-1} Q_{\text{OTFS}}[n, u] g_{\text{tx}}(t-n T) e^{j 2 \pi u \Delta u(t-n T)},
\label{eq25.05.09.1}
\end{equation}
where $T=L\Delta t$, $\Delta u=K \Delta f$, $g_{\text{tx}}(t)$ denotes the transmit pulse filter.

After passing through the same DSC described in (\ref{eq2.12.2}), we have the received time-domain signal given by
\begin{align}
	d_{\text{OTFS}}(t)=\sum_{i=1}^{P} h_{i}e^{j  2 \pi k_{i}\Delta ft}s(t-l_{i}\Delta t)+w(t),  
	\label{eq25.10.08.1}
\end{align}
which is converted to the TF domain with Wigner transform as
\begin{equation}
 G_{\text{OTFS}}[n, u]=\int d_{\text{OTFS}}(t) g_{\text{rx}}^{*}(t-n T) e^{-j 2 \pi u \Delta u(t-n T)} d t,
\label{eq25.05.09.2}
\end{equation}
where $g_{\text{rx}}(t)$ is the receive pulse filter. Then, $\mathbf{G}_{\text{OTFS}}$
is mapped back to DD-domain symbols with SFFT transform, which is given by
\begin{equation}
	Y_{\text{OTFS}}[k, l]=\frac{1}{\sqrt{KL}} \sum_{n=0}^{K-1} \sum_{u=0}^{L-1} G_{\text{OTFS}}[n, u] e^{-j 2 \pi\left(\frac{n k}{K}-\frac{u l}{L}\right)}.
	\label{eq25.05.09.3}
\end{equation}
Finally, the IOR of OTFS with bi-orthogonal transmit and receive pulses\footnote{Ideal bi-orthogonal transmit and receive pulses are used for ease of derivation. However, practical rectangular transmit and receive pulse filters are considered in the simulation part to demonstrate the robustness of the proposed schemes \cite{bb26.3.16.1, bb25.5.9.1,bb25.5.9.2}.} in the DD domain is given by \cite{bb23.2.7.1}
\begin{equation}
	\begin{aligned}
Y_{\text{OTFS}}[k, l]=\sum_{i=1}^{P} h_{i} e^{-j 2 \pi  \frac{k_{i}l_{i}}{KL} } X_{\text{OTFS}}[(k-k_{i})_{K},&(l-l_{i})_{L}] \\
& +w[k,l],
\end{aligned}
	\label{eq2.12.4}
\end{equation}
where $w[k,l]$ represents the DD-domain AWGM. Notably, Equation (\ref{eq2.12.4}) indicates that the DSC induces DD-domain symbol spreading at the receiver.

\begin{remark} 
	\textup{It has been proven in that \cite{bb23.2.12.1} OTFS attains nearly optimal diversity, i.e., $\rho_{\text{SISO-OTFS}}\approx P$, in DSCs in the finite SNR region when the frame size $KL$ is sufficiently large.}
\end{remark}

\subsection{MIMO-AFDM and MIMO-OTFS Systems}
The noise-free time-domain matrix form of the IOR between all TAs and the $r$th RA in MIMO-AFDM and MIMO-OTFS can be denoted as
\begin{align}
		\mathbf{d}_{r} &=\mathbf{\tilde{H}}_{r,1} \mathbf{s}_{1}+\mathbf{\tilde{H}}_{r,2} \mathbf{s}_{2}+\cdots+\mathbf{\tilde{H}}_{r,N_{t}} \mathbf{s}_{N_{t}},
	\label{eq25.05.10.1}
\end{align}
where
\begin{equation}
	\mathbf{\tilde{H}}_{r,t}=\sum_{i=1}^{P} h_{i}^{[r,t]} \mathbf{\tilde{H}}_{i}^{[r,t]} 
	=\sum_{i=1}^{P} h_{i}^{[r,t]} \boldsymbol{\Delta}_{N}^{k_{i}} \boldsymbol{\Pi}_{N}^{l_{i}}
	\label{eq25.05.10.2}
\end{equation}
is the time-domain channel matrix between the $r$th RA and the $t$th TA, $\mathbf{\tilde{H}}_{i}^{[r,t]}$
and $h^{[r,t]}_{i}$ are the associated time-domain subchannel matrix and the channel gain of the $i$th path, respectively. Moreover, the noise-free modulation-domain matrix form IOR can be denoted as
\begin{align}
		\mathbf{y}_{r} &=\mathbf{H}_{r,1} \mathbf{x}_{1}+\mathbf{H}_{r,2} \mathbf{x}_{2}+\cdots+\mathbf{H}_{r,N_{t}} \mathbf{x}_{N_{t}},
	\label{eq100}
\end{align}
where $\mathbf{H}_{r,t}=\sum_{i=1}^{P}h_{i}^{[r,t]} \mathbf{H}_{i}^{[r,t]}$
represents the modulation-domain channel matrix (e.g., $\mathbf{H}_{\text{AFDM}}$ in (\ref{eq2-12-2}) for AFDM) between the $r$th RA and the $t$th TA, and $\mathbf{H}_{i}^{[r,t]}$ is the associated modulation-domain subchannel matrix of the $i$th path.

When each TA transmits independent information symbols, i.e., $\mathbf{x}_{1}\neq \mathbf{x}_{2}\neq \dots \neq\mathbf{x}_{N_{t}}$, MIMO systems explore the transmit multiplexing gain for linear increment in spectral efficiency with the number of TAs and the optimal receive diversity gain with the number of RAs. Instructively, the multiple TAs can also be leveraged to obtain transmit diversity gain by performing dedicated precoding on fewer symbols at multiple TAs, as is applied in the following proposed CDDS scheme, trading multiplexing gain (spectral efficiency) for diversity gain (reliability).

\section{Cyclic Delay-Doppler Shift}
\label{Sec3}
In this section, we propose TD-CDDS and MD-CDDS schemes for MIMO-AFDM and MIMO-OTFS, leveraging their optimal diversity properties to acquire transmit diversity gain. A detailed performance analysis will be provided in the next section. 

 \begin{figure}[tbp]
	\centering
	\includegraphics[width=0.45\textwidth,height=0.25\textwidth]{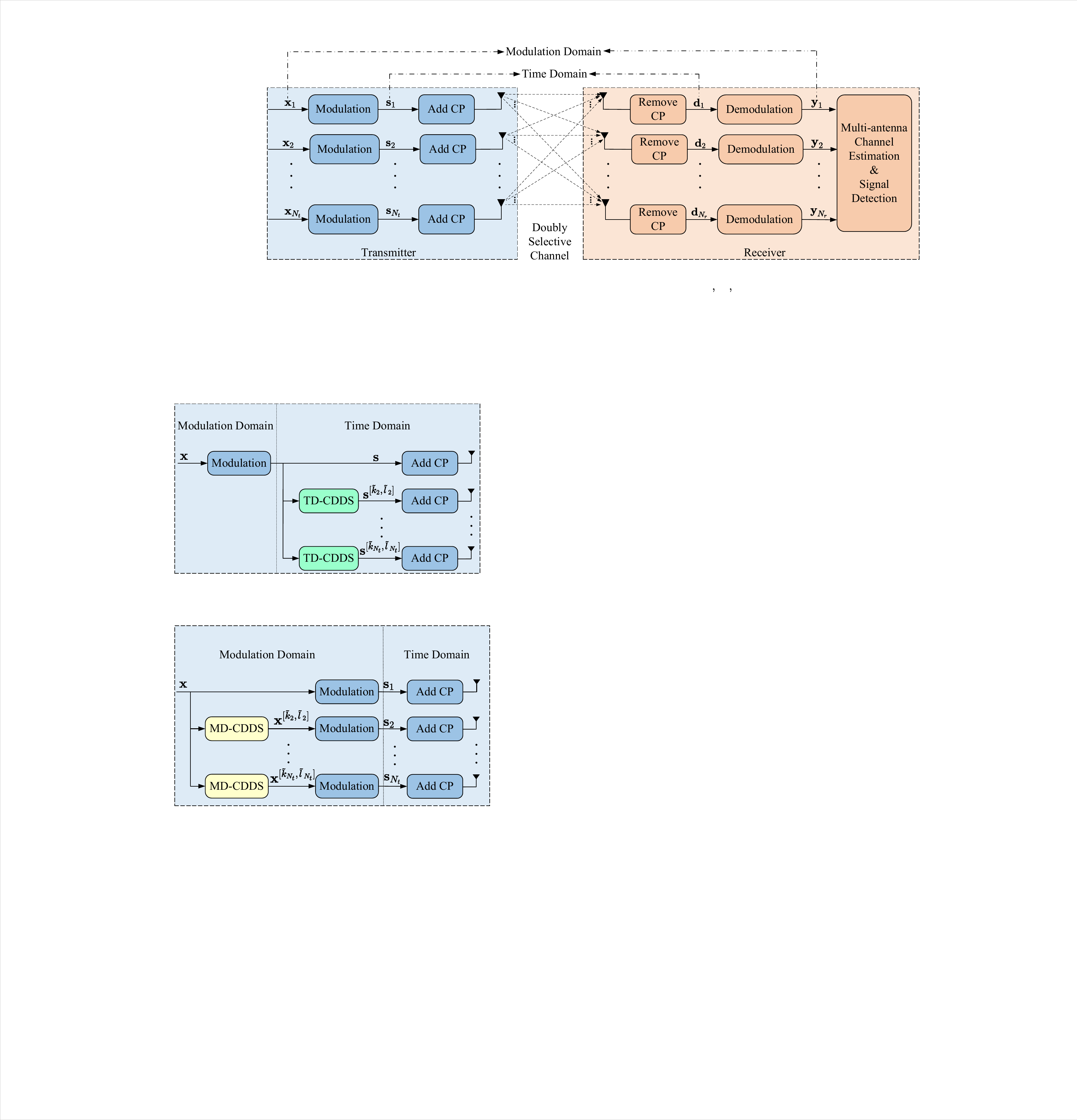}
	\caption{Block diagrams of the transmitter in a TD-CDDS system.}
	\label{3-1}
\end{figure}

\subsection{TD-CDDS}
We first introduce the TD-CDDS scheme, whose block diagram is shown in Fig. \ref{3-1}. TD-CDDS is implemented after modulation at the transmitter with $N_{t}$ TAs.  Let $\mathbf{x}$ represent the modulation-domain information vector to be transmitted and $\mathbf{s}$ represent the CP-free transmitted time-domain vector after modulation. At the $t$th TA ($t=2,\cdots,N_{t}$), we perform $\tilde{l}_{t}$-step cyclic delay shift and $\tilde{k}_{t}$-step cyclic Doppler shift on $\mathbf{s}$, referred to as $[\tilde{k}_{t},\tilde{l}_{t}]$-step TD-CDDS, by multiplying $\mathbf{s}$ with $\boldsymbol{\Pi}_{N}^{\tilde{l}_{t}}$ and $\boldsymbol{\Delta}_{N}^{\tilde{k}_{t}}$ successively as
\begin{equation}
	\mathbf{s}^{[\tilde{k}_{t},\tilde{l}_{t}]} = \boldsymbol{\Delta}_{N}^{\tilde{k}_{t}} \boldsymbol{\Pi}_{N}^{\tilde{l}_{t}}\mathbf{s}= \mathbf{C}_{\text{TD-CDDS}}^{[\tilde{k}_{t},\tilde{l}_{t}]}\mathbf{s},
	\label{eq2.14.1}
\end{equation}
with $\mathbf{C}_{\text{TD-CDDS}}^{[\tilde{k}_{t},\tilde{l}_{t}]}=\boldsymbol{\Delta}_{N}^{\tilde{k}} \boldsymbol{\Pi}_{N}^{\tilde{l}}$ being the $[\tilde{k}_{t},\tilde{l}_{t}]$-step TD-CDDS precoding matrix. Then  after adding a CP, $\mathbf{s}^{[\tilde{k}_{t},\tilde{l}_{t}]}$ is transmitted into the DSC at the $t$th TA.

Based on that, by substituting $\mathbf{s}, \mathbf{s}^{[\tilde{k}_{2},\tilde{l}_{2}]},\dots, \mathbf{s}^{[\tilde{k}_{N_{t}},\tilde{l}_{N_{t}}]}$ into $\mathbf{s}_{1}, \mathbf{s}_{2},\dots, \mathbf{s}_{N_{t}}$ in (\ref{eq25.05.10.1}), respectively, we have the received time-domain signal at the $r$th RA as
\begin{equation}
	\mathbf{d}_{r} =\underbrace{\mathbf{\tilde{H}}_{r,1} \mathbf{s}}_{\mathbf{d}_{r,1}}+\underbrace{\mathbf{\tilde{H}}_{r,2} \mathbf{s}^{[\tilde{k}_{2},\tilde{l}_{2}]}}_{\mathbf{d}_{r,2}}+\cdots+\underbrace{\mathbf{\tilde{H}}_{r,N_{t}} \mathbf{s}^{[\tilde{k}_{N_{t}},\tilde{l}_{N_{t}}]}}_{\mathbf{d}_{r,N_{t}}},
	\label{eq25.05.10.4}
\end{equation}
where $\mathbf{d}_{r,t}$ ($t=1, \dots, N_{t}$) is the component come from the $t$th TA and can be rewritten according to (\ref{eq25.05.10.2}) and (\ref{eq2.14.1}) as 
\begin{align}
	\mathbf{d}_{r,t} &= \mathbf{\tilde{H}}_{r,t}
	\mathbf{C}_{\text{TD-CDDS}}^{[\tilde{k}_{t},\tilde{l}_{t}]}
	\mathbf{s} \notag\\
	&=\sum_{i=1}^{P} h_{i}^{[r,t]} \boldsymbol{\Delta}_{N}^{k_{i}} \boldsymbol{\Pi}_{N}^{l_{i}}
	\boldsymbol{\Delta}_{N}^{\tilde{k}_{t}} \boldsymbol{\Pi}_{N}^{\tilde{l}_{t}}\mathbf{s}
	\label{eq25.05.12.1} \\
	&=\sum_{i=1}^{P} h_{i}^{[r,t]}
	e^{-j \frac{2 \pi}{N} \tilde{k}_{t} l_{i}}
	\boldsymbol{\Delta}_{N}^{k_{i}}
	\boldsymbol{\Delta}_{N}^{\tilde{k}_{t}} \boldsymbol{\Pi}_{N}^{l_{i}}
	 \boldsymbol{\Pi}_{N}^{\tilde{l}_{t}}\mathbf{s}
	 \label{eq25.05.12.2}\\
&=\sum_{i=1}^{P} \bar{h}_{i}^{[r,t]} \boldsymbol{\Delta}_{N}^{\bar{k}_{i}^{[t]}} \boldsymbol{\Pi}_{N}^{\bar{l}_{i}^{[t]}} 
\mathbf{s} \label{eq25.05.12.3}\\
&=\sum_{i=1}^{P}\bar{h}_{i}^{[r,t]} \mathbf{\tilde{\bar{{H}}}}_{i}^{[r,t]}\mathbf{s} =\mathbf{\tilde{\bar{H}}}_{r,t}\mathbf{s},
\label{eq24.01.09.2}
\end{align}
where the derivation from (\ref{eq25.05.12.1}) to (\ref{eq25.05.12.2}) is based on the fact that $\boldsymbol{\Pi}_{N}^{l_{i}}
\boldsymbol{\Delta}_{N}^{\tilde{k}_{t}}=	e^{-j \frac{2 \pi}{N} \tilde{k}_{t} l_{i}}\boldsymbol{\Delta}_{N}^{\tilde{k}_{t}} \boldsymbol{\Pi}_{N}^{l_{i}}$, $\bar{h}_{i}^{[r,t]}=h_{i}^{[r,t]}
e^{-j \frac{2 \pi}{N} \tilde{k}_{t} l_{i}}$, $ \bar{k}_{i}^{[t]}=k_{i}+\tilde{k}_{t}$, and $\bar{l}_{i}^{[t]}=l_{i}+\tilde{l}_{t}$ represent the effective channel gain, Doppler shift and delay shift of the $i$th path between the $r$th RA and the $t$th TA after $[\tilde{k}_{t},\tilde{l}_{t}]$-step TD-CDDS, respectively, $\mathbf{\tilde{\bar{{H}}}}_{i}^{[r,t]}$ and $\mathbf{\tilde{\bar{H}}}_{r,t}=\sum_{i=1}^{P}\bar{h}_{i}^{[r,t]} \mathbf{\tilde{\bar{{H}}}}_{i}^{[r,t]}$
are the associated effective time-domain subchannel matrix and channel matrix, respectively.  Subsequently, substituting (\ref{eq24.01.09.2}) into 
(\ref{eq25.05.10.4}), we have
\begin{align}
	\mathbf{d}_{r} &=\mathbf{\tilde{\bar{H}}}_{r,1}\mathbf{s}
	+\mathbf{\tilde{\bar{H}}}_{r,2}\mathbf{s}+\cdots+\mathbf{\tilde{\bar{H}}}_{r,N_{t}}\mathbf{s} \notag \\
	&=\sum_{t=1}^{N_{t}}\mathbf{\tilde{\bar{H}}}_{r,t} \mathbf{s}=\mathbf{\tilde{\bar{H}}}_{r}\mathbf{s}
	\label{eq25.05.12.4}
\end{align}
where $\mathbf{\tilde{\bar{H}}}_{r}=\sum_{t=1}^{N_{t}}\mathbf{\tilde{\bar{H}}}_{r,t}$.

\begin{remark} 
	\textup{Equation (\ref{eq25.05.12.3}) shows that the TD-CDDS operation on $\mathbf{s}$ in (\ref{eq2.14.1}) is equivalent to shifting the delay and Doppler shifts of all the original propagation paths simultaneously by $\tilde{l}_{t}$ and $\tilde{k}_{t}$, respectively\footnote{This is also feasible for AFDM with a CPP that cannot be simplified to a CP by simply performing phase compensation according to the extra phase in the CPP provided in (\ref{eq24.01.09.1}) and the adopted CDDS step.}. Since (\ref{eq2.14.1}) is performed in the time domain, we refer to it as TD-CDDS, which can be considered as the generalization of CDD and DDoD from the frequency-selective channels to the DSCs. Moreover, we can observe from (\ref{eq25.05.12.4}) that the MIMO system developed in (\ref{eq25.05.10.1}) essentially turns into an equivalent single-input multiple-output (SIMO) system after TD-CDDS, where $\mathbf{\tilde{\bar{H}}}_{r}$ serves as the equivalent time-domain channel matrix between the equivalent single TA and the $r$th RA.}
\end{remark}

\subsection{MD-CDDS}
\label{subsec4-2}
\begin{figure}[tbp]
	\centering
	\includegraphics[width=0.45\textwidth,height=0.25\textwidth]{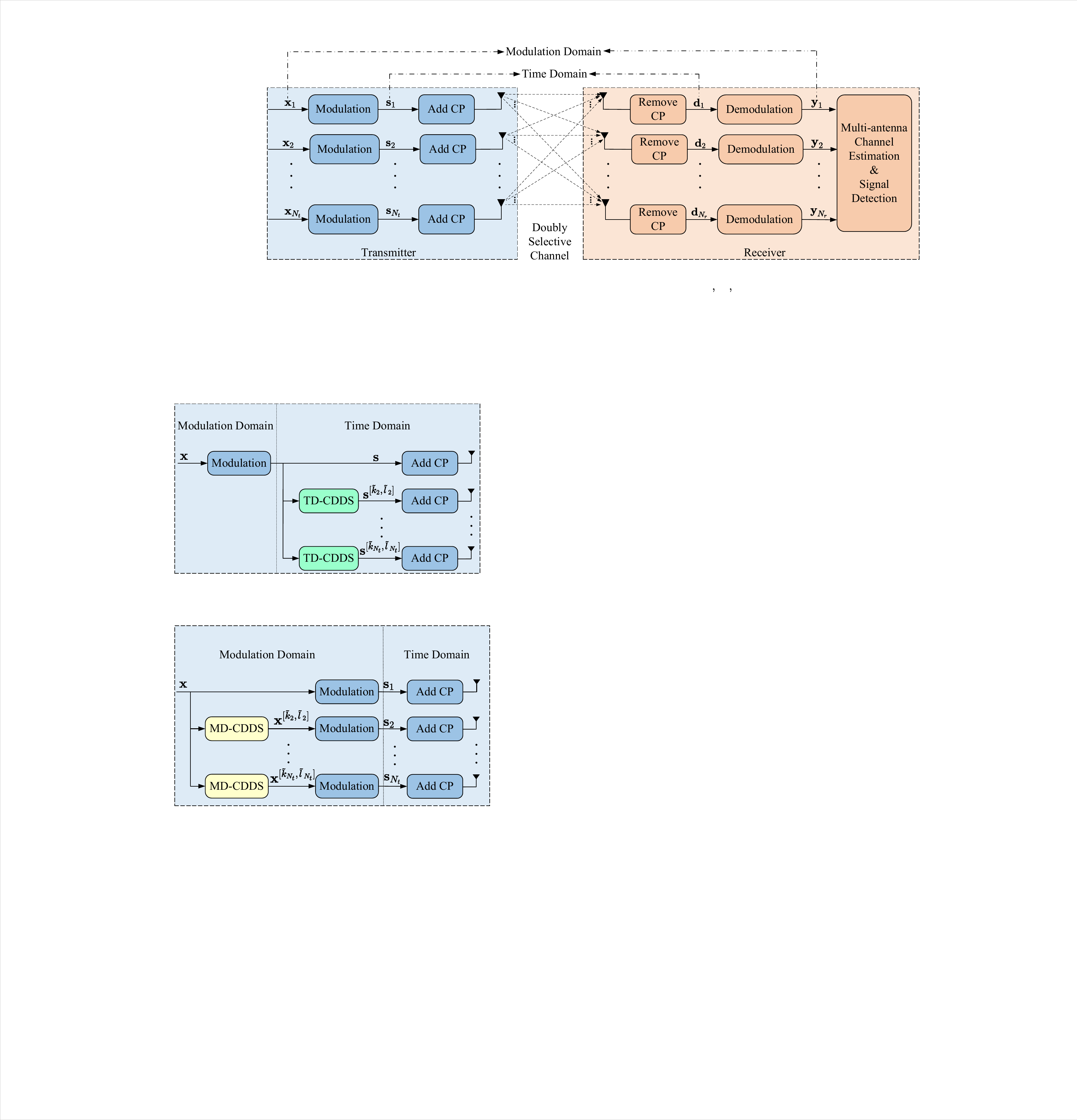}
	\caption{Block diagrams of the transmitter in an MD-CDDS system.}
	\label{3-2}
\end{figure}

We next illustrate the implementation of MD-CDDS in MIMO-OTFS and MIMO-AFDM systems. Different from TD-CDDS, MD-CDDS is performed before the modulation, as shown in Fig. \ref{3-2}. 

\subsubsection{\textbf{MD-CDDS-OTFS}} Let $\mathbf{x}_{\text{OTFS}}=\text{vec}(\mathbf{X}_{\text{OTFS}})$, then the $\tilde{l}_{t}$-step cyclic delay shift of $\mathbf{X}_{\text{OTFS}}$ can be obtained with $\text{vec}^{-1}(\boldsymbol{\Pi}_{KL}^{K*\tilde{l}_{t}}\mathbf{x}_{\text{OTFS}})$, and the $\tilde{k}_{t}$-step cyclic Doppler shift counterpart can be acquired with $\text{vec}^{-1}((\mathbf{I}_{L}
\otimes \boldsymbol{\Pi}_{K}^{\tilde{k}_{t}}) \mathbf{x}_{\text{OTFS}})$, where $\otimes$ denotes the Kronecker product operation.
Therefore, performing $[\tilde{k}_{t},\tilde{l}_{t}]$-step MD-CDDS on $\mathbf{X}_{\text{OTFS}}$ at the $t$th TA is equivalent to multiplying $\mathbf{X}_{\text{OTFS}}$ with $\boldsymbol{\Pi}_{KL}^{K*\tilde{l}_{t}}$ and $(\mathbf{I}_{L}
\otimes \boldsymbol{\Pi}_{K}^{\tilde{k}})$ successively as
\begin{align}
		\mathbf{X}_{\text{OTFS}}^{[\tilde{k}_{t},\tilde{l}_{t}]}
		&=\text{vec}^{-1}((\mathbf{I}_{L}
		\otimes \boldsymbol{\Pi}_{K}^{\tilde{k}})\boldsymbol{\Pi}_{KL}^{K*\tilde{l}_{t}}\mathbf{x}_{\text{OTFS}})
		 \notag \\
		&= \text{vec}^{-1}(\mathbf{C}_{\text{MD-CDDS}}^{[\tilde{k}_{t},\tilde{l}_{t}],\text{OTFS}}\mathbf{x}_{\text{OTFS}}),
	\label{eq2.14.3}
\end{align}
where  
\begin{align}
	\mathbf{C}_{\text{MD-CDDS}}^{[\tilde{k}_{t},\tilde{l}_{t}],\text{OTFS}}=(\mathbf{I}_{L}
	\otimes \boldsymbol{\Pi}_{K}^{\tilde{k}})\boldsymbol{\Pi}_{KL}^{K*\tilde{l}_{t}}
		\label{eq25.05.12.6}
\end{align} 
is the $[\tilde{k}_{t},\tilde{l}_{t}]$-step MD-CDDS precoding matrix of OTFS. Notably, we have
\begin{align}
	X_{\text{OTFS}}^{[\tilde{k}_{t},\tilde{l}_{t}]}[k,l] = X_{\text{OTFS}}[(k-\tilde{k}_{t})_{K},(l-\tilde{l}_{t})_{L}].
	\label{eq25.05.12.5}
\end{align}
Consequently, the noise-free IOR of OTFS in (\ref{eq2.12.4}) becomes
\begin{align}	
		&Y_{\text{OTFS}}^{[r,t]}[k, l] = 
		\sum_{i=1}^{P} h_{i}^{[r,t]} e^{-j 2 \pi \frac{k_{i}l_{i}}{KL}}		X_{\text{OTFS}}^{[\tilde{k}_{t},\tilde{l}_{t}]}[(k-k_{i})_{K},(l-l_{i})_{L}] \notag \\
		&\overset{(\ref{eq25.05.12.5})}{=}\sum_{i=1}^{P} h_{i}^{[r,t]}
		e^{j 2 \pi \frac{k_{i}\tilde{l}_{t}+\tilde{k}_{t}l_{i}+\tilde{k}_{t}\tilde{l}_{t}}{KL}} 
		e^{-j 2 \pi \frac{(k_{i} + \tilde{k}_{t} )(l_{i}+ \tilde{l}_{t})}{KL}} \notag \\
		 & \qquad  \qquad \qquad \times X_{\text{OTFS}}[(k-(k_{i}+\tilde{k}_{t}))_{K},(l-(l_{i}+\tilde{l}_{t}))_{L}] \notag \\
		 & = \sum_{i=1}^{P} \bar{h}_{i, \text{OTFS}}^{[r,t]}
		 e^{-j 2 \pi \frac{\bar{k}_{i}^{[t]}\bar{l}_{i}^{[t]}}{KL}}
		 X_{\text{OTFS}}[(k-\bar{k}_{i}^{[t]})_{K},(l-\bar{l}_{i}^{[t]})_{L}],
	\label{eq2.14.4}
\end{align}
where $\mathbf{Y}_{\text{OTFS}}^{[r,t]}$ is the received DD-domain symbol matrix of the $r$th RA that contributed by the $t$th TA, $\bar{h}_{i, \text{OTFS}}^{[r,t]} = h_{i}^{[r,t]}
e^{j 2 \pi \frac{k_{i}\tilde{l}_{t}+\tilde{k}_{t}l_{i}+\tilde{k}_{t}\tilde{l}_{t}}{KL}}$, $ \bar{k}_{i}^{[t]}=k_{i}+\tilde{k}_{t}$, and $\bar{l}_{i}^{[t]}=l_{i}+\tilde{l}_{t}$ represent the effective channel gain, Doppler shift and delay shift of the $i$th path between the $r$th RA and the $t$th TA after $[\tilde{k}_{t},\tilde{l}_{t}]$-step MD-CDDS, respectively.

\subsubsection{\textbf{MD-CDDS-AFDM}} For AFDM with the parameter setting of (\ref{eq7-03-1}), a full DD representation can be obtained, i.e., all the paths can be sufficiently separated in the DAFT domain \cite{bb23.1.3.4}. Fig. \ref{3-3} demonstrates the bijective relationship between the DD-domain and DAFT-domain channel representations in OTFS and AFDM, where the latter can be considered as splicing multiple $[2(k_{\max}+ k_{\text{space}})+1]$-length delay blocks with each delay block is separated from its adjacence by two $k_{\text{space}}$-length spacing band. Insightfully, this bijective relationship indicates that a unit Doppler shift results in a unit cyclic shift in the DAFT domain, whereas a unit delay shift corresponds to a $(-[2(k_{\max}+ k_{\text{space}})+1])$-step cyclic shift in the DAFT domain \cite{bb25.01.08.2}. This property aligns with the DAFT-domain symbol spreading demonstrated in the AFDM IOR in (\ref{eq2.12.6}). Consequently, $[\tilde{k}_{t},\tilde{l}_{t}]$-step MD-CDDS in AFDM corresponds to performing $\tilde{m}_{t}$-step shifts in the DAFT domain, where 
\begin{align}
\tilde{m}_{t} &= \tilde{k}_{t}-[ 2(k_{\max}+ k_{\text{space}})+1] \tilde{l}_{t}\notag \\
&= \tilde{k}_{t}-2Nc_{1}\tilde{l}_{t}.
	\label{eq25.05.12.10}
\end{align} 
Besides, to ensure that the indices of the new paths after cyclic Doppler shifts remain in the original delay blocks, the parameter $k_{\text{space}}$ in $c_{1}$ should satisfy
\begin{align}
	k_{\text{space}} \geq  \max\{k_{\max}, k_{2,\max},\dots,k_{N_{t},\max}\}-k_{\max},
	\label{eq25.05.12.8}
\end{align} 
where
\begin{align}
k_{t,\max}\triangleq\max\{|k_{1}+\tilde{k}_{t}|,\dots,|k_{P}+\tilde{k}_{t}|\}, \ t=2, \dots, N_{t}
	\label{eq25.05.12.9}
\end{align} 
 denotes the effective maximum Doppler after $\tilde{k}_{t}$-step cyclic Doppler shift.
 
\begin{figure}[tbp]
 	\centering
 	\includegraphics[width=0.47\textwidth,height=0.17\textwidth]{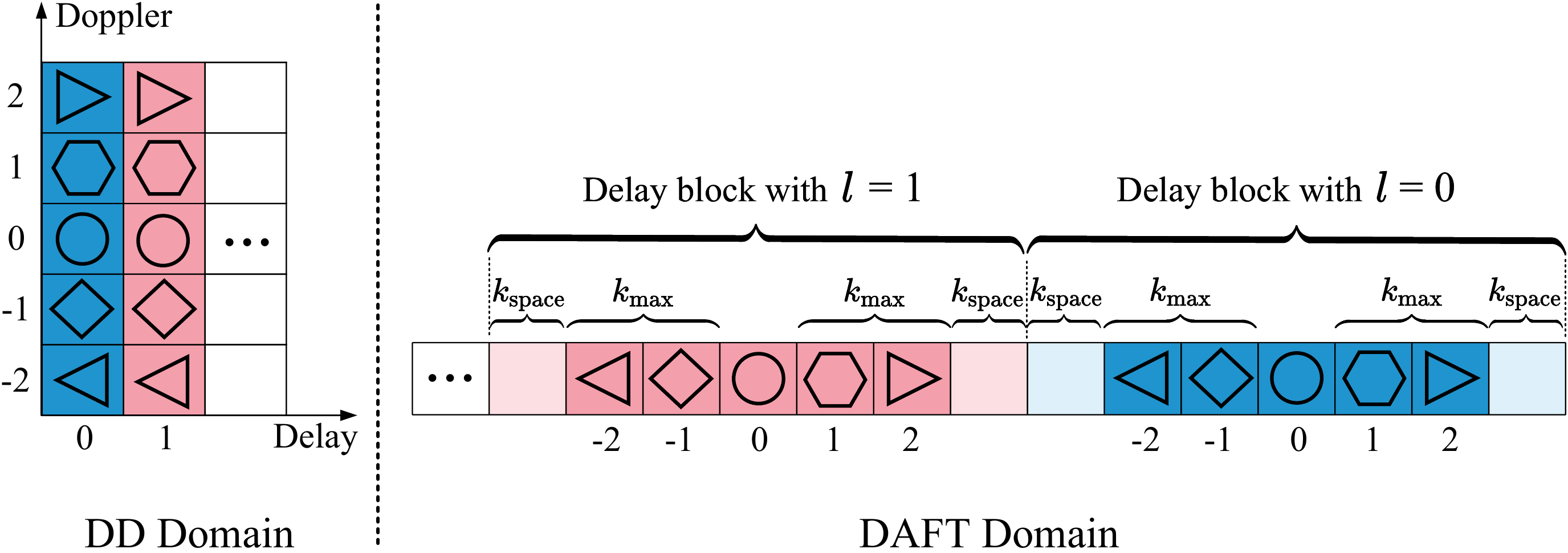}
 	\caption{The bijective relation between DD-domain (left) and DAFT-domain (right)  channel representations ($k_{\max}=2$ and $k_{\text{space}}=1$).}
 	\label{3-3}
 \end{figure}

Building on these insights, we proceed to derive the MD-CDDS matrix of AFDM. To this end, we first multiple $\mathbf{x}_{\text{AFDM}}$ with $\boldsymbol{\Pi}_{N}^{\tilde{m}_{t}}$ to perform $\tilde{m}_{t}$-step cyclic DAFT-domain shift as 
\begin{align}
	\mathbf{x}_{\text{AFDM}}^{[\tilde{m}_{t}]} = 	\boldsymbol{\Pi}_{N}^{\tilde{m}_{t}}\mathbf{x}_{\text{AFDM}},
	\label{eq25.05.12.12}
\end{align} 
which implies that
\begin{align}
	\mathbf{x}_{\text{AFDM}}^{[\tilde{m}_{t}]}[m] = 	\mathbf{x}_{\text{AFDM}}[(m-\tilde{m}_{t})_{N}].
	\label{eq25.05.12.13}
\end{align} 
Then the noise-free IOR in (\ref{eq2.12.6}) converts to (\ref{eq2.16.1}) (shown at the top of the next page), where  $\mathbf{y}_	{\text{AFDM}}^{[r,t]}$ is the received DAFT-domain symbol of the $r$th RA that contributed by the $t$th TA. Substituting (\ref{eq25.05.12.13}) into (\ref{eq2.16.1}), we have (\ref{eq25.05.14.1}). Let $\bar{m} = (m'-\tilde{m}_{t})_{N}$, we obtain $m' = (\bar{m}+ \tilde{m}_{t})_{N}$, which yields (\ref{eq25.05.14.2}), where
\begin{align}
	\bar{m}& = (\left(m+\operatorname{ind}_{i}\right)_{N}-\tilde{m}_{t})_{N}\notag\\
	&=(m+2 N c_{1} l_{i}-k_{i}-(\tilde{k}_{t}-2Nc_{1}\tilde{l}_{t}))_{N}\notag\\
	&=(m+2 N c_{1} (l_{i}+\tilde{l}_{t})-(k_{i}+\tilde{k}_{t}))_{N}\notag\\
	&=(m+\bar{\operatorname{ind}}_{i}^{[t]})_{N}
	\label{eq25.05.12.15}
\end{align} 	
and $
	\bar{\operatorname{ind}}_{i}^{[t]} \triangleq(2 N c_{1}\bar{l}_{i}^{[t]}-\bar{k}_{i}^{[t]})_{N}$ is the effective index indicator of the $i$th path. After some algebraic manipulations, we arrive at (\ref{eq25.05.15.2}), where
\begin{figure*}
\begin{align}
			y_{\text{AFDM}}^{[r,t]}[m]&=\sum_{i=1}^{P}  h_{i}^{[r,t]} e^{j \frac{2 \pi}{N}\left(N c_{1} l_{i}^{2}-m^{\prime} l_{i}+N c_{2}\left(m'^{2}-m^{2}\right)\right)} x_{\text{AFDM}}^{[\tilde{m}_{t}]}[m^{\prime}], \ \ m^{\prime}=\left(m+\operatorname{ind}_{i}\right)_{N} \label{eq2.16.1} \\
			&\overset{(\ref{eq25.05.12.13})}{=}\sum_{i=1}^{P}  h_{i}^{[r,t]} e^{j \frac{2 \pi}{N}\left(N c_{1} l_{i}^{2}-m^{\prime} l_{i}+N c_{2}\left(m'^{2}-m^{2}\right)\right)} x_{\text{AFDM}}[(m'-\tilde{m}_{t})_{N}], \ \ m^{\prime}=\left(m+\operatorname{ind}_{i}\right)_{N} \label{eq25.05.14.1} \\
			&=\sum_{i=1}^{P}  h_{i}^{[r,t]} e^{j \frac{2 \pi}{N}\left(N c_{1} l_{i}^{2}-(\bar{m}+ \tilde{m}_{t})_{N} l_{i}+N c_{2}\left((\bar{m}+ \tilde{m}_{t})_{N}^{2}-m^{2}\right)\right)} x_{\text{AFDM}}[\bar{m}], \ \ \bar{m}=(m+\bar{\operatorname{ind}}_{i}^{[t]})_{N} \label{eq25.05.14.2} \\
			&=\sum_{i=1}^{P} 
			 h_{i}^{[r,t]} 
			e^{j \frac{2 \pi}{N}\left(-Nc_{1}(2l_{i}\tilde{l}_{t}+\tilde{l}_{t}^{2})+ \left(\bar{m}-(\bar{m}+\tilde{m}_{t})_{N}\right)l_{i}+\bar{m}\tilde{l}_{t}+Nc_{2}\left(\left(\bar{m}+\tilde{m}_{t}\right)_{N}^{2}-\bar{m}^{2}\right)\right)} \notag\\
			& \qquad  \quad \qquad \qquad  \qquad \qquad \qquad \times e^{j \frac{2 \pi}{N}\left(N c_{1} (l_{i}+\tilde{l}_{t})^{2}-\bar{m}(l_{i}+\tilde{l}_{t})+N c_{2}\left(\bar{m}^{2}-m^{2}\right)\right)} x_{\text{AFDM}}[\bar{m}], \ \ \bar{m}=(m+\bar{\operatorname{ind}}_{i}^{[t]})_{N} \notag\\
			&=\sum_{i=1}^{P} 
			 h_{i}^{[r,t]}
			\underbrace{e^{-j \frac{2 \pi}{N}\left(Nc_{1}(2l_{i}\tilde{l}_{t}+\tilde{l}_{t}^{2})
				+\tilde{m}_{t}l_{i}\right)}}_{\mathcal{A}(l_{i}, \tilde{m}_{t})}
			\underbrace{e^{j \frac{2 \pi}{N}\left( 
				\bar{m}\tilde{l}_{t}+Nc_{2}\left(\left(\bar{m}+\tilde{m}_{t}\right)_{N}^{2}-\bar{m}^{2}\right)
				\right)}}_{\mathcal{E}(\bar{m}, \tilde{m}_{t})}
			\notag\\
			& \qquad  \qquad \quad  \qquad \qquad  \qquad \qquad \qquad \times e^{j \frac{2 \pi}{N}\left(N c_{1} (\bar{l}_{i}^{[t]})^{2}-\bar{m}\bar{l}_{i}^{[t]}+N c_{2}\left(\bar{m}^{2}-m^{2}\right)\right)} x_{\text{AFDM}}[\bar{m}], \ \ \bar{m}=(m+\bar{\operatorname{ind}}_{i}^{[t]})_{N}. \label{eq25.05.15.2} 
\end{align} 
\hrulefill
\end{figure*}
\begin{align}
	\mathcal{A}(l_{i}, \tilde{m}_{t}) = 
	e^{-j \frac{2 \pi}{N}\left(Nc_{1}(2l_{i}\tilde{l}_{t}+\tilde{l}_{t}^{2})
		+\tilde{m}_{t}l_{i}\right)}
\end{align}
and 
\begin{equation}
	\mathcal{E}(\bar{m}, \tilde{m}_{t}) = e^{j \frac{2 \pi}{N}\left( 
		\bar{m}\tilde{l}_{t}+Nc_{2}\left(\left(\bar{m}+\tilde{m}_{t}\right)_{N}^{2}-\bar{m}^{2}\right)
		\right)}.
	\label{eq2.16.3}
\end{equation}
It is important to notice that $\mathcal{E}(m^{\prime}, \tilde{m}_{t})$ has no relevance to the channel parameters $h_{i}, l_{i}$, and $k_{i}$ and received DAFT-domain index $m$, which means we can compensate it in advance at the transmitting end by multiplying $\mathbf{x}_{\text{AFDM}}$ with a phase compensation  matrix 
\begin{align}
\mathbf{P}_{N}^{[\tilde{m}_{t}]} = \operatorname{diag}\left(\mathcal{E}^{*}(\bar{m}, \tilde{m}_{t}), \bar{m}=0,1, \ldots, N-1\right)
\end{align}
before performing $\tilde{m}_{t}$-step cyclic DAFT-domain shift. 
\begin{figure*}
	\begin{equation}
		y_{\text{AFDM}}^{[r,t]}[m]=\sum_{i=1}^{P} \bar{h}_{i,\text{AFDM}}^{[r,t]} e^{j \frac{2 \pi}{N}\left(N c_{1} (\bar{l}_{i}^{[t]})^{2}-m^{\prime} \bar{l}_{i}^{[t]}+N c_{2}\left(m'^{2}-m^{2}\right)\right)} x_{\text{AFDM}}[\bar{m}], \ \ \bar{m}=\left(m+\bar{\operatorname{ind}}_{i}^{[t]} \right)_{N}, \ 0 \leq m \leq N-1.
		\label{eq2.16.2}
	\end{equation}
	\hrulefill
\end{figure*}
By doing so, (\ref{eq25.05.15.2}) can be further simplified as (\ref{eq2.16.2}), where $\bar{h}_{i,\text{AFDM}}^{[r,t]} = h_{i}^{[r,t]}\mathcal{A}(l_{i}, \tilde{m}_{t})$ is the effective channel gain of the $i$th path between the $r$th RA and the $t$th TA after $[\tilde{k}_{t},\tilde{l}_{t}]$-step MD-CDDS in AFDM. Therefore, we have the final $[\tilde{k}_{t},\tilde{l}_{t}]$-step MD-CDDS precoding matrix of AFDM as  
\begin{align}
	\mathbf{C}_{\text{MD-CDDS}}^{[\tilde{k}_{t},\tilde{l}_{t}],\text{AFDM}}=\boldsymbol{\Pi}_{N}^{\tilde{m}_{t}} \mathbf{P}_{N}^{[\tilde{m}_{t}]},
	\label{eq25.05.12.7}
\end{align}
which involves two procedures of first performing phase compensation and then performing cyclic DAFT-domain shift. Table \ref{tabel2} summarizes the precoding matrices of $[\tilde{k}_{t},\tilde{l}_{t}]$-step TD-CDDS and MD-CDDS in OTFS and AFDM systems.

\begin{table}[t]
	\renewcommand\arraystretch{1.7}
	\centering
	\caption{$[\tilde{k}_{t},\tilde{l}_{t}]$-Step TD-CDDS and MD-CDDS Precoding Matrices of OTFS and AFDM}
	\label{tabel2}
	\begin{tabular}{|c|c|}
		\hline
		TD-CDDS (Eq. \ref{eq2.14.1}) & $\qquad \mathbf{C}_{\text{TD-CDDS}}^{[\tilde{k}_{t},\tilde{l}_{t}]} = \boldsymbol{\Delta}_{N}^{\tilde{k}} \boldsymbol{\Pi}_{N}^{\tilde{l}} \qquad $\\ 
		\hline
		MD-CDDS-OTFS (\ref{eq25.05.12.6}) & $\mathbf{C}_{\text{MD-CDDS}}^{[\tilde{k}_{t},\tilde{l}_{t}],\text{OTFS}}=(\mathbf{I}_{L}
		\otimes \boldsymbol{\Pi}_{K}^{\tilde{k}})\boldsymbol{\Pi}_{KL}^{K*\tilde{l}_{t}}$ \\
		\hline
		MD-CDDS-AFDM (\ref{eq25.05.12.7}) & $ \qquad \mathbf{C}_{\text{MD-CDDS}}^{[\tilde{k}_{t},\tilde{l}_{t}],\text{AFDM}}=\boldsymbol{\Pi}_{N}^{\tilde{m}_{t}} \mathbf{P}_{N}^{[\tilde{m}_{t}]}\qquad$ \\			
		\hline
	\end{tabular}
\end{table}

\begin{remark} 
	\textup{Comparing (\ref{eq2.14.4}) and (\ref{eq2.16.2}) with (\ref{eq2.12.4}) and  (\ref{eq2.12.6}) correspondingly, we can observe that MD-CDDS can also be considered as shifting the delay and Doppler shifts of all the original propagation paths simultaneously by $\tilde{l}_{t}$ and $\tilde{k}_{t}$, respectively\footnote{Since Equation (\ref{eq2.12.6}) and the derivations in (\ref{eq25.05.12.10})-(\ref{eq25.05.12.7}) hold for all values of $c_{1}$ in AFDM, the proposed MD-CDDS scheme is also feasible for AFDM with a CPP that cannot be simplified to a CP.}.}
\end{remark}

Furthermore, substituting $\mathbf{x}, \mathbf{x}^{[\tilde{k}_{2},\tilde{l}_{2}]},\dots, \mathbf{x}^{[\tilde{k}_{N_{t}},\tilde{l}_{N_{t}}]}$ into $\mathbf{x}_{1}, \mathbf{x}_{2},\dots, \mathbf{x}_{N_{t}}$ in (\ref{eq100}), respectively, where $\mathbf{x}^{[\tilde{k}_{t},\tilde{l}_{t}]}=	\mathbf{C}^{[\tilde{k}_{t},\tilde{l}_{t}]}\mathbf{x}$, $\mathbf{C}^{[\tilde{k}_{t},\tilde{l}_{t}]}$ is the $[\tilde{k}_{t},\tilde{l}_{t}]$-step MD-CDDS precoding matrix of the adopted waveform ($\mathbf{C}_{\text{MD-CDDS}}^{[\tilde{k}_{t},\tilde{l}_{t}],\text{AFDM}}$ and $\mathbf{C}_{\text{MD-CDDS}}^{[\tilde{k}_{t},\tilde{l}_{t}],\text{OTFS}}$ for AFDM and OTFS, respectively),  we have the received modulation-domain signal at the $r$th RA as
\begin{align}
	\mathbf{y}_{r} &=\mathbf{H}_{r,1} \mathbf{x}+\mathbf{H}_{r,2} \mathbf{x}^{[\tilde{k}_{2},\tilde{l}_{2}]}+\cdots+\mathbf{H}_{r,N_{t}} \mathbf{x}^{[\tilde{k}_{N_{t}},\tilde{l}_{N_{t}}]}\notag \\
	&=\mathbf{H}_{r,1} \mathbf{x}+
	\underbrace{\mathbf{H}_{r,2} \mathbf{C}^{[\tilde{k}_{2},\tilde{l}_{2}]}}_{\mathbf{\bar{H}}_{r,2}}\mathbf{x}
	+\cdots+\underbrace{\mathbf{H}_{r,N_{t}} \mathbf{C}^{[\tilde{k}_{N_{t}},\tilde{l}_{N_{t}}]}}_{\mathbf{\bar{H}}_{r,N_{t}}}\mathbf{x} \notag \\
	&=\mathbf{\bar{H}}_{r}\mathbf{x},
	\label{eq25.05.16.1}
\end{align}
where $\mathbf{\bar{H}}_{r,t} = \mathbf{H}_{r,t} \mathbf{C}^{[\tilde{k}_{t},\tilde{l}_{t}]}$ is the effective modulation-domain channel matrix between the $r$th RA and the $t$th TA after $[\tilde{k}_{t},\tilde{l}_{t}]$-step MD-CDDS, $\mathbf{\bar{H}}_{r} = (\mathbf{H}_{r,1}+\mathbf{\bar{H}}_{r,2}+\dots+\mathbf{\bar{H}}_{r,N_{t}})$. This indicates that the MIMO system developed in (\ref{eq100}) can also be viewed as an equivalent SIMO system after MC-CDDS, where $\mathbf{\bar{H}}_{r}$ serves as the equivalent modulation-domain channel matrix between the equivalent single TA and the $r$th RA.

 \begin{figure}[tbp]
	\centering
	\includegraphics[width=0.47\textwidth,height=0.21\textwidth]{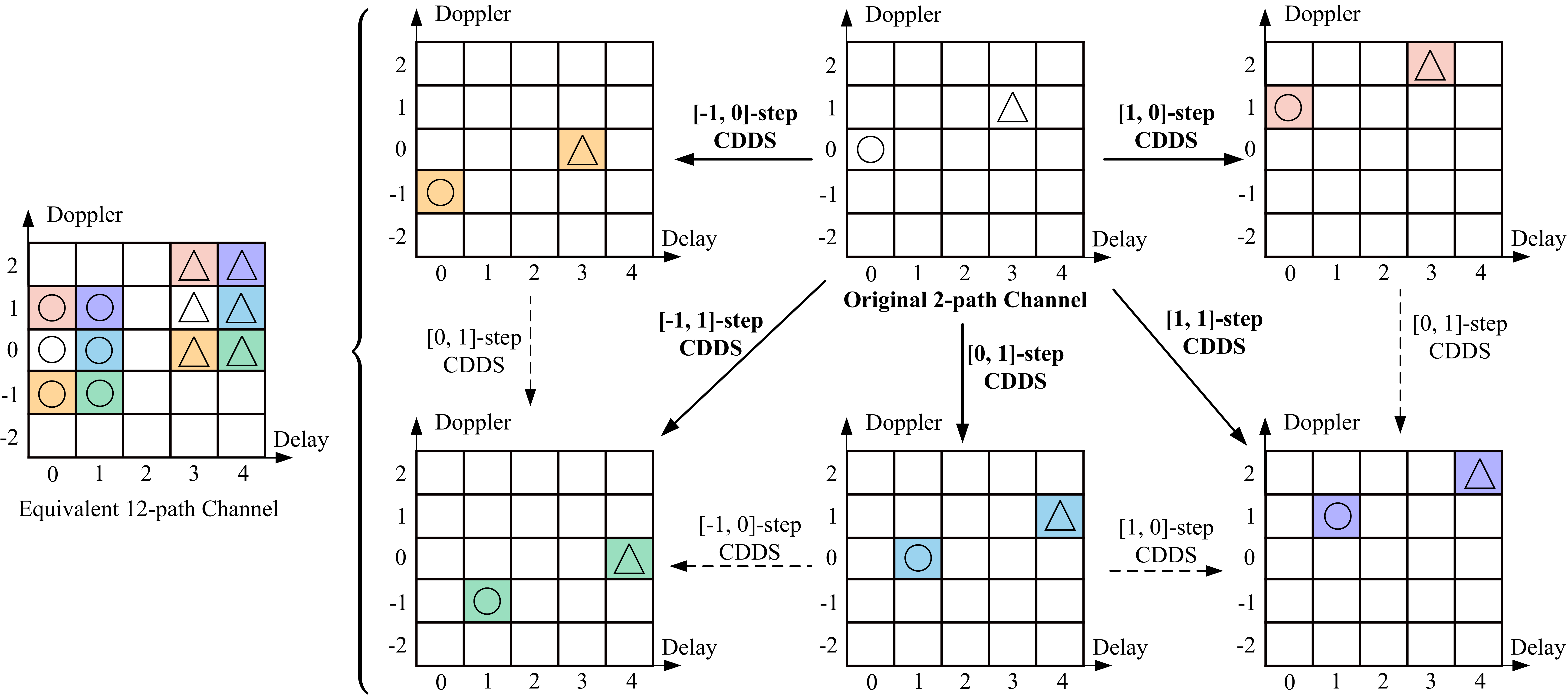}
	\caption{An example of the equivalent DD-domain channel after CDDS with six TAs.}
	\label{3-4}
\end{figure}

\subsection{Graphic Interpretation of CDDS}
We further provide a graphic interpretation of the principle CDDS in the sequel, as shown in Fig. \ref{3-4} (shown at the top of the next page). Assume that there are two paths in the original DSC between the transmitter and receiver, as indicated by “$\bigcirc $" and “$\bigtriangleup$", respectively, and the transmitter is equipped with six TAs. Then, the effective channels that after \{[-1, 0], [1, 0],  [-1, 1], [0, 1], [1, 1]\}-step CDDS are provided, respectively, which are associated with $\mathbf{\tilde{\bar{H}}}_{r,t}$ from (\ref{eq24.01.09.2}) in the time domain and  $\mathbf{\bar{H}}_{r,t}$ from (\ref{eq25.05.16.1}) in the modulation domain, where the corresponding effective paths are highlighted with different colors. Based on that, the equivalent 12-path channel
of the equivalent SIMO system can be obtained, which is associated with $\mathbf{\tilde{\bar{H}}}_{r}$ from (\ref{eq25.05.12.4}) in the time domain and  $\mathbf{\bar{H}}_{r}$ from (\ref{eq25.05.16.1}) in the modulation domain. Notably, as will be illustrated in Sec.~\ref{Sec4.2}, this equivalent single-TA characteristic in CDDS will significantly reduce channel estimation overhead compared to conventional MIMO-AFDM and MIMO-OTFS systems that require estimating a total of $N_{t}N_{r}$ DSCs across all pairs of TA and RA. Moreover, it is worth mentioning that, although the effective maximum delay is enlarged, the required length of CP remains unchanged, given that the delay shifts are performed cyclically in both time and modulation domains.

\subsection{Comparison between TD-CDDS and MD-CDDS}
From the viewpoint of the ultimate effect, the difference between TD-CDDS and MD-CDDS lies in the extra complex exponentials in the effective channel gains, i.e., $\bar{h}_{i}^{[r,t]}$ for TD-CDDS in (\ref{eq25.05.12.3}), $\bar{h}_{i, \text{OTFS}}^{[r,t]}$ for MD-CDDS-OTFS in (\ref{eq2.14.4}), and $\bar{h}_{i,\text{AFDM}}^{[r,t]}$ for MD-CDDS-AFDM in (\ref{eq2.16.2}). Moreover, while TD-CDDS is suitable for all waveforms, where all waveforms share the same TD-CDDS precoding matrix, the MD-CDDS precoding matrix should be designed elaborately according to the modulation-domain IOR of the specific waveform\footnote{Although this paper mainly focuses on AFDM and OTFS, both CDDS schemes can be directly extended to other waveforms that explore multipath for diversity, such as Zak-OTFS, ODDM, and orthogonal chirp division multiplexing (OCDM).}. Furthermore, TD-CDDS only requires one modulation operation, while MD-CDDS requires $N_{t}$ times modulation, which means  TD-CDDS enjoys lower complexity. However, since the CDDS is carried out before modulation in MD-CDDS, it is more feasible for MD-CDDS to combine with other precoding techniques to achieve joint precoding without introducing additional operation overhead to the transmitter.

Besides, it is worth emphasizing that the TD-CDDS and MD-CDDS matrices of OTFS and AFDM, namely $\mathbf{C}_{\text{TD-CDDS}}^{[\tilde{k}_{t},\tilde{l}_{t}]}$, $	\mathbf{C}_{\text{MD-CDDS}}^{[\tilde{k}_{t},\tilde{l}_{t}],\text{OTFS}}$, and $\mathbf{C}_{\text{MD-CDDS}}^{[\tilde{k}_{t},\tilde{l}_{t}],\text{AFDM}}$, are all sparse phase-compensated permutation matrices, which have no relevance to the DD profile of the real-time DSCs. It means that the TD-CDDS and MD-CDDS matrices can be calculated only once in advance with relatively low complexity at the transmitter, despite the ever-changing channels, and the implementation of CDDS only involves dot product and cyclic shift. Moreover, the equivalent single-TA property of CDDS makes it feasible to directly apply low-complexity signal detectors proposed for SISO systems to CDDS systems \cite{bb25.5.9.1}, which is of extreme significance from the perspective of practical implementation.

\section{Performance Analysis of CDDS}
\label{Sec4}
In this section, we evaluate the performance of the two proposed CDDS schemes.

\subsection{Transmit Diversity}
We first analyze the transmit diversity order of MIMO-AFDM and MIMO-OTFS systems with CDDS. Defining the DD profile of the channel before CDDS as $\mathbb{P} = \{(k_{1},l_{1}),\cdots,(k_{P},l_{P})\}$, then the effective DD profile after  $[\tilde{k}_{t},\tilde{l}_{t}]$-step CDDS can be denoted as $\mathbb{P}^{[\tilde{k}_{t},\tilde{l}_{t}]}$, and the effective DD profile of the equivalent SIMO system can be represented as $\mathbb{\tilde{P}}=\mathbb{P}\cup \mathbb{P}^{[\tilde{k}_{2},\tilde{l}_{2}]}\cup \cdots \cup\mathbb{P}^{[\tilde{k}_{N_{t}},\tilde{l}_{N_{t}}]}$. Then, we have the following proposition.

\begin{proposition}  \label{propos1}
	Optimal transmit diversity gain  quivalent to the number of transmit antennas $N_{t}$ can be achieved by MIMO-AFDM and MIMO-OTFS with CDDS in doubly selective channels if
	\begin{align}
		\mathbb{P}\cap  \mathbb{P}^{[\tilde{k}_{2},\tilde{l}_{2}]}\cap  \cdots \cap \mathbb{P}^{[\tilde{k}_{N_{t}},\tilde{l}_{N_{t}}]}=\varnothing \ \text{and} \ N_{t}P\leq N.
		\label{eq25.05.13.3}
	\end{align}	
\end{proposition}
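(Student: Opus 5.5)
The plan is to reduce the MIMO system with CDDS to the equivalent SIMO system derived in (\ref{eq25.05.12.4}) and (\ref{eq25.05.16.1}), and then invoke the optimal-diversity results for SISO-AFDM (Remark~\ref{Remark1}) and SISO-OTFS. By Remarks 3 and 4, after $[\tilde{k}_{t},\tilde{l}_{t}]$-step CDDS the contribution of the $t$th TA at the $r$th RA is a $P$-path DSC with DD profile $\mathbb{P}^{[\tilde{k}_{t},\tilde{l}_{t}]}$. Its effective gains are $\bar{h}_{i}^{[r,t]}$, which differ from $h_{i}^{[r,t]}$ only by deterministic unit-modulus phases. The equivalent channel $\bar{\mathbf{H}}_{r}$ (or $\tilde{\bar{\mathbf{H}}}_{r}$) is therefore
\begin{equation*}
\sum_{t=1}^{N_{t}}\sum_{i=1}^{P}\bar{h}_{i}^{[r,t]}\mathbf{H}\big(\bar{k}_{i}^{[t]},\bar{l}_{i}^{[t]}\big),
\end{equation*}
where $\mathbf{H}(k,l)$ denotes the modulation-domain subchannel matrix of a single path with DD shift $(k,l)$. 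I will read the non-overlap condition in (\ref{eq25.05.13.3}) as requiring the shifted profiles to be pairwise disjoint, so that $\operatorname{card}(\tilde{\mathbb{P}})=N_{t}P$. Since the $h_{i}^{[r,t]}$ are independent $\mathcal{CN}$ across $i$ and $t$, and rotation by a fixed phase preserves circular symmetry, the $N_{t}P$ effective gains are i.i.d. complex Gaussian. The equivalent system is thus a SIMO link over a DSC with $N_{t}P$ distinct, independently faded paths.

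Next I follow the standard pairwise-error-probability argument used in \cite{bb23.1.3.4} and \cite{bb23.2.12.1}. For a fixed RA, write $\bar{\mathbf{H}}_{r}\mathbf{x}=\boldsymbol{\Phi}(\mathbf{x})\bar{\mathbf{h}}_{r}$, where $\boldsymbol{\Phi}(\mathbf{x})\in\mathbb{C}^{N\times N_{t}P}$ stacks the columns $\mathbf{H}(\bar{k}_{i}^{[t]},\bar{l}_{i}^{[t]})\mathbf{x}$. The PEP under ML detection is then bounded via the Chernoff/$Q$-function bound as
\begin{equation*}
\prod_{j=1}^{R}\Big(1+\tfrac{\lambda_{j}^{2}}{4N_{0}}\Big)^{-N_{r}},
\end{equation*}
where $\lambda_{j}^{2}$ are the nonzero eigenvalues of $\boldsymbol{\Phi}(\boldsymbol{\delta})^{H}\boldsymbol{\Phi}(\boldsymbol{\delta})$ and $\boldsymbol{\delta}=\mathbf{x}-\mathbf{x}'$. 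The diversity order is therefore $N_{r}\min_{\boldsymbol{\delta}\neq\mathbf{0}}\operatorname{rank}\boldsymbol{\Phi}(\boldsymbol{\delta})$. Full transmit diversity amounts to showing that this minimum rank equals $N_{t}P$, which is possible only if $N_{t}P\leq N$ (the column count cannot exceed the row count). This explains the second condition in (\ref{eq25.05.13.3}). Comparing with $\rho_{\text{SISO}}=P$ gives a gain of exactly $N_{t}$.

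The main obstacle is the rank step. For AFDM I will use the bijection of Fig.~\ref{3-3}, with $c_{1}$ chosen as in (\ref{eq7-03-1}), $k_{\text{space}}$ satisfying (\ref{eq25.05.12.8}), and $N$ large enough as in (\ref{eq3-27}) for the enlarged profile. Under these choices distinct pairs in $\tilde{\mathbb{P}}$ map to distinct effective index indicators $\bar{\operatorname{ind}}_{i}^{[t]}$. The columns of $\boldsymbol{\Phi}(\boldsymbol{\delta})$ are then distinct cyclic shifts of $\boldsymbol{\delta}$ multiplied by chirp phases. The generic full-rank argument of \cite{bb23.1.3.4} (Theorem 1) applies, relying on $c_{2}$ being irrational or sufficiently small so that no nonzero $\boldsymbol{\delta}$ produces linear dependence. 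For OTFS, the columns are distinct 2-D cyclic DD shifts of $\boldsymbol{\delta}$. Here I would only claim, as in \cite{bb23.2.12.1}, that rank deficiency occurs for a vanishing fraction of error patterns when $KL$ is large, yielding near-optimal diversity $\approx N_{t}P$ at finite SNR. I would state the OTFS part with this caveat, mirroring Remark~2.

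Finally, TD-CDDS and MD-CDDS differ only in the deterministic phases $\mathcal{A}$ and the OTFS phase factors. These phases are absorbed into the gains and affect neither the rank nor the distribution, so the conclusion holds for both schemes.
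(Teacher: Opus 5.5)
Your proposal is correct and follows essentially the same route as the paper's Appendix~A. You reduce CDDS to an equivalent single-TA link whose $N_tP$ phase-rotated gains stay i.i.d.\ Gaussian, bound the PEP so that the diversity order is the rank of $\boldsymbol{\Phi}(\mathbf{x})-\boldsymbol{\Phi}(\hat{\mathbf{x}})$, and invoke the SISO-AFDM full-diversity result (Remark~\ref{Remark1}) for the enlarged profile with $\operatorname{card}(\tilde{\mathbb{P}})=N_tP$, and your added care refines points the paper leaves implicit rather than changing the argument: reading the non-overlap condition as pairwise disjointness (which $\operatorname{card}(\tilde{\mathbb{P}})=N_tP$ actually needs once $N_t\geq 3$), re-imposing (\ref{eq3-27}) and (\ref{eq25.05.12.8}) for the shifted profile, and stating OTFS only as near-optimal in line with the paper's OTFS remark, where the paper just says that case ``follows similarly.''
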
 
\begin{proof}
	See Appendix \ref{APP1}.
\end{proof}
Insightfully, (\ref{eq25.05.13.3}) indicates that to acquire optimal transmit diversity, one should carefully choose the CDDS steps at all TAs so that there is no overlap among the $N_{t}$ effective DD profiles that are associated with the $N_{t}$ TAs, as the example shown in Fig. \ref{3-4}\footnote{It was shown in \cite{23.11.18.2, bb26.2.10.1} that applying channel coding can potentially improve the ability of OTFS systems to acquire optimal diversity order. Therefore, CDDS-based MIMO-OTFS can effectively achieve optimal transmit diversity gain of $N_{t}$ with channel coding.}. Therefore, we refer to (\ref{eq25.05.13.3}) as \textbf{\emph{path non-overlap}} condition.

\begin{figure*}[htbp]
	\centering
	\includegraphics[width=0.88\textwidth,height=0.48\textwidth]{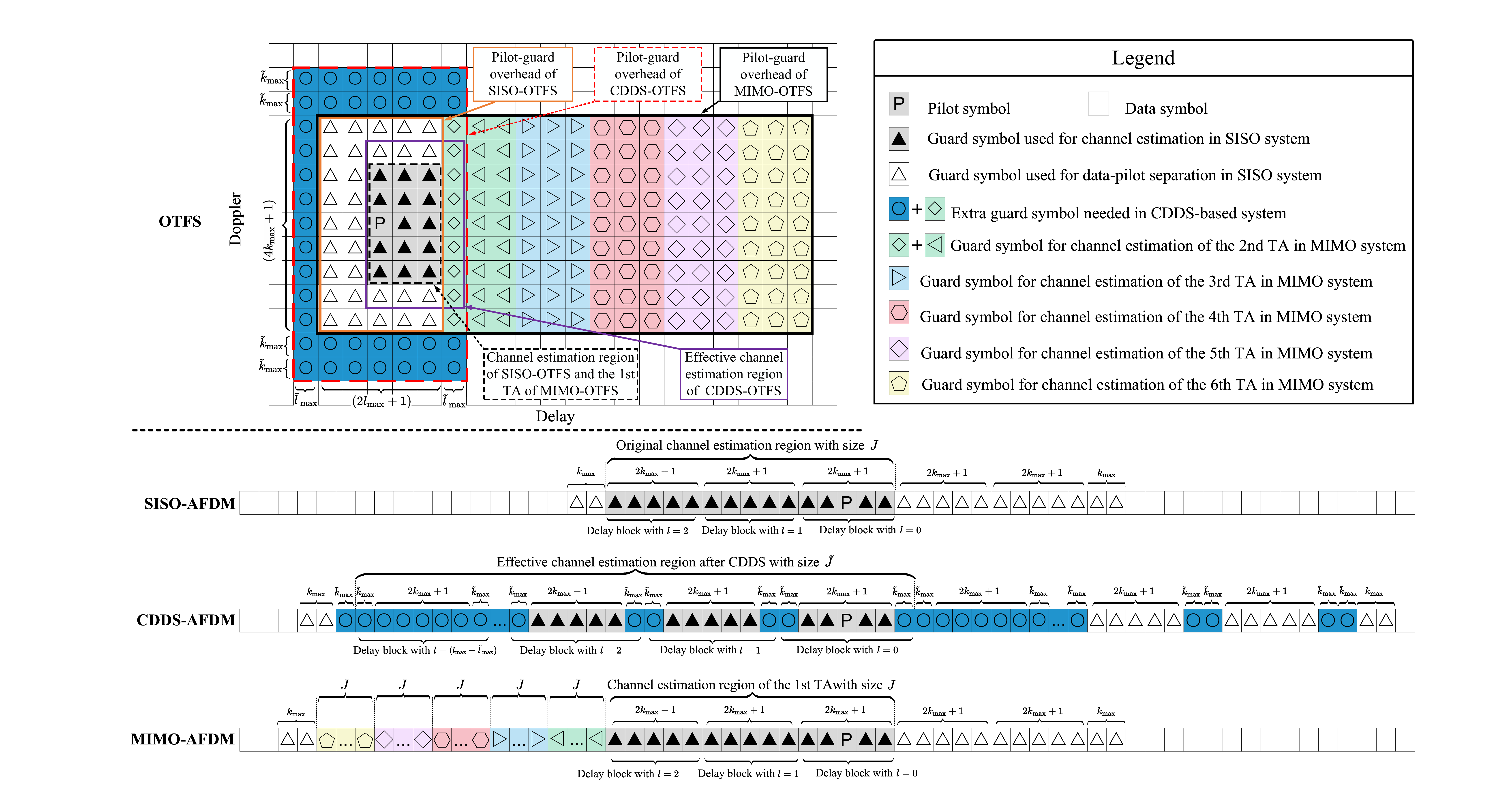}
	\caption{Data-pilot-guard symbol arrangement in AFDM and OTFS with SISO, MIMO, and CDDS configurations (six TAs). }
	\label{5-0}
\end{figure*}

\begin{table*}[ht]
	\renewcommand\arraystretch{1.3}
	\centering
	\caption{Channel Estimation Overhead of AFDM and OTFS Systems}
	\label{tabel7-31-1}
	\begin{tabular}{|c|c|c|}
		\hline
		\diagbox{Configuration}{Waveform}   & AFDM & OTFS\\ 
		\hline
		SISO \cite{bb23.1.3.4, bb25.5.20.1} & $2(l_{\max}+1)\big(2k_{\max }+1\big)-1$ & $(2l_{\max}+1)(4k_{\max }+1)$\\
		\hline
		MIMO \cite{bb23.2.10.7, bb25.5.20.1} & $(N_{t}+1)(l_{\max}+1)\big(2k_{\max }+1\big)-1$ & $\big(N_{t}(l_{\max}+1)+l_{\max}\big)\big(4k_{\max }+1\big)$ \\	
		\hline
				CDD \cite{bb23.2.10.10} &$2(l_{\max}+N_{t})(2k_{\max}+1)-1$
		& $\big( 2(l_{\max}+N_{t}-1)+1\big)(4k_{\max }+1)$ \\
		\hline
		DoDD &$2(l_{\max}+1)(2k_{\max}+N_{t}+1)-1$
		& $(2l_{\max}+1)(4k_{\max}+ 2N_{t}+1)$ \\
		\hline
		CDDS &$2(l_{\max}+\tilde{l}_{\max}+1)\big(2(k_{\max}+\tilde{k}_{\max})+1\big)-1$
		& $( 2(l_{\max}+\tilde{l}_{\max})+1)(4(k_{\max }+\tilde{k}_{\max})+1)$ \\
		\hline
	\end{tabular}
\end{table*}

\subsection{EPA Channel Estimation Overhead}
\label{Sec4.2}
We next analyze the channel estimation overhead of CDDS-based AFDM and OTFS systems with embedded pilot-aided (EPA) channel estimation scheme \cite{bb23.1.3.4, bb23.2.10.7, bb25.5.20.1}. Fig.~\ref{5-0} (shown on the next page) shows the data-pilot-guard symbol arrangement of AFDM and OTFS with SISO, MIMO, and CDDS configurations. We can observe that the channel estimation overhead of CDDS-based AFDM and OTFS systems are much lower than the counterparts of the conventional MIMO-AFDM and MIMO-OTFS systems, whose data-pilot-guard symbol arrangement is adopted by most of the conventional transmit diversity techniques, e.g., Alamouti STC \cite{bb23.2.10.8,bb23.2.10.9}. This can be attributed to the fact that the CDDS-based system is equivalent to a SIMO system, which sufficiently explores the channel sparsity and hence only requires a few extra guard symbols compared to the SISO system. In contrast, the conventional MIMO systems separately estimate the channel parameters between all pairs of TA and RA, which require a significant increase of guard symbols.  
 
Furthermore, Table \ref{tabel7-31-1} (shown on the next page) quantifies the pilot-guard overhead of AFDM and OTFS with SISO, MIMO, and CDDS configurations for channel estimation, where 
 \begin{align}
 	\tilde{k}_{\max}\triangleq\max\{k_{\max}, k_{2,\max},\dots,k_{N_{t},\max}\}-k_{\max}
 \end{align}
 and $\tilde{l}_{\max}\triangleq \max\{\tilde{l}_{2},\dots,\tilde{l}_{N_{t}}\}$
 represent the maximum extra Doppler shift and maximum cyclic-delay shift, respectively, $k_{t,\max}$ and $\tilde{l}_{t}$ are defined in (\ref{eq25.05.12.9}) and (\ref{eq2.14.1}), respectively, ${k}_{\text{space}}$ is set to be $\tilde{k}_{\max}$,  $J=(l_{\max}+1)\big(2k_{\max }+1\big)$, and $\tilde{J}=(l_{\max}+\tilde{l}_{\max}+1)\big(2(k_{\max}+\tilde{k}_{\max})+1\big)$.Notably, the pilot-guard overhead of AFDM and OTFS systems with conventional MIMO, CDD, or DoDD configuration grows linearly with $N_{t}$. In contrast, the pilot-guard overhead of the proposed CDDS scheme exhibits a similar form as the SISO case and thus increases more slowly with the increase of $N_{t}$. This sharp distinction will be illustrated in Fig.~\ref{5-2} in Sec.~\ref{Sec5.2}.

\subsection{CDDS-Step Selection Criterion}
\label{sec4-4}
Building on the above insights, one optimal CDDS-step selection strategy is to minimize the channel estimation overhead while guaranteeing the optimal transmit diversity. This can be expressed as
\begin{align}
	&\min_{\tilde{k}_{2},\dots,\tilde{k}_{N_{t}},\tilde{l}_{2},\dots,\tilde{l}_{N_{t}}} O_{\text{CDDS-AFDM}} \ \text{or} \ O_{\text{CDDS-OTFS}} \notag \\ &  \qquad \quad \text{s.t.} \qquad  \qquad \qquad  \quad \ (\ref{eq25.05.13.3}),
	\label{eq25.05.22.1}
\end{align}
where notations $O_{\text{CDDS-AFDM}}$ and $O_{\text{CDDS-OTFS}}$ stand for the channel estimation overhead of CDDS-AFDM and CDDS-OTFS, respectively.

In practice, one can flexibly and empirically adjust the CDDS step  $\{\tilde{k}_{2},\dots,\tilde{k}_{N_{t}},\tilde{l}_{2},\dots,\tilde{l}_{N_{t}}\}$  through modulation-domain channel prediction to achieve or approximate (\ref{eq25.05.22.1}) given that the modulation-domain representations of DSCs in AFDM and OTFS are typically sparse and relatively stable. Furthermore, it is worth noting that a low channel estimation overhead calls for small $\tilde{k}_{\max}$ and $\tilde{l}_{\max}$, while the path non-overlap condition in (\ref{eq25.05.13.3}) prefers large  $\tilde{k}_{\max}$ and $\tilde{l}_{\max}$ to accommodate the optimal transmit diversity requirement. Therefore, there is a natural tradeoff between the channel estimation overhead and the achievable transmit diversity gain in CDDS-AFDM and CDDS-OTFS systems.

\subsection{Advantages over Conventional Schemes}
\label{sec4-5}
Compared to the classic Alamouti scheme \cite{bb23.2.10.8,bb23.2.10.9}, the proposed CDDS not only enjoys a much lower channel estimation overhead but also has a more flexible and simple TA and RA management. Specifically, for general Alamouti-based MIMO-OTFS and MIMO-AFDM systems, each data symbol is transmitted in $\eta$ frames ($\eta$ is the transmit diversity order achieved), where the precoding operation of each TA varies when the number of TAs changes, and the receiver should perform maximal-ratio combining correspondingly. This means that high latency is inevitable, and the receiver should precisely know the number of TAs. In contrast, the proposed CDDS scheme is conducted within a single frame, which achieves much lower latency. Moreover, increasing or decreasing the number of TAs in CDDS only introduces specific changes in the relative TAs and the receiver only needs to know $(l_{\max}+\tilde{l}_{\max})$ and $ (k_{\max}+\tilde{k}_{\max})$ to determine the channel estimation region, which is much simpler than the Alamouti scheme.

Furthermore, compared to the conventional CDD and DDoD schemes,  CDDS possesses a higher degree of flexibility thanks to an extra degree of freedom to perform cyclic shift, which enables it to achieve a more compelling tradeoff between transmit diversity and channel estimation overhead. Besides, it was shown in \cite{bb25.6.18.1} that combining Alamouti and CDD in OFDM can achieve better performance than pure CDD, which might be a promising extension of CDDS.

\section{Simulation Results}
\label{Sec5}
In this section, we verify the effectiveness of the two proposed CDDS schemes through simulation. We first adopt the maximum likelihood (ML) optimal detector with a small frame size and perfect CSI to evaluate the theoretical achievable transmitted diversity gain. Then, the widely-used message passing (MP) detector proposed in \cite{bb25.5.9.1} with large frame size and imperfect CSI are adopted to examine the robustness of the proposed CDDS scheme in MIMO-AFDM and MIMO-OTFS systems. In particular, $a \times b$ denotes a MIMO configuration of $a$ TAs and $b$ RAs. If not otherwise specified, rectangular pulse shaping is applied in both AFDM and OTFS systems.
For each channel realization, the channel coefficient $h_{i}$ follows the distribution of $\mathcal{C N}(0, 1 / P)$.

\begin{figure}[tbp]
	\centering
	\includegraphics[width=0.40\textwidth,height=0.345\textwidth]{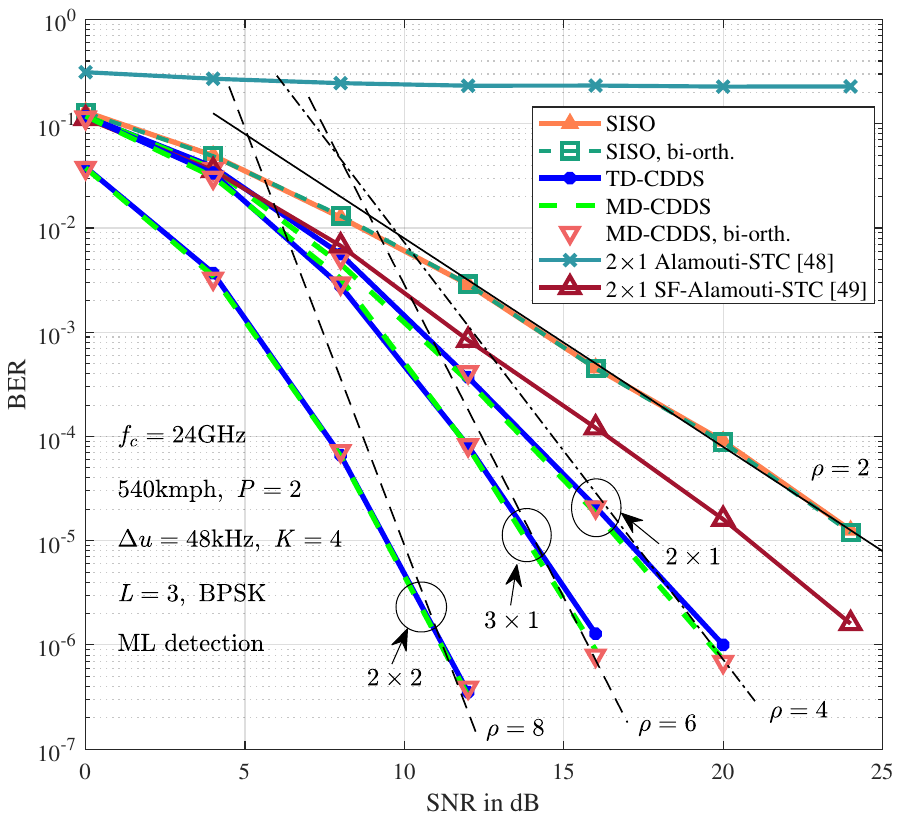}
	\caption{BER performance of OTFS systems with different transmit and receive antenna configurations and transmit diversity schemes, integer Doppler.}
	\label{4-1}
\end{figure}

\subsection{Perfect CSI}
We adopt carrier frequency $f_{c}=24$ GHz, number of subcarriers $N=12$, AFDM subcarrier spacing $\Delta f=12$ kHz, number of Doppler samples in OTFS frame $K=4$, number of delay samples in OTFS frame $L=3$, OTFS subcarrier spacing $\Delta u=48$ kHz to ensure the same TF resources are occupied by AFDM and OTFS. $P=2$ paths with DD profile of $\{(0,0), (-1,1)\}$ is applied, corresponding to a maximum Doppler shift of 12 kHz and a maximum UE speed of 540 kmph\footnote{To guarantee optimal transmit diversity gain in small frame size, we adopt a fixed DD profile channel and apply \{$[1,2]$, $[1,0]$, $[0,2]$\}-step CDDS on the 2nd, the 3rd, and the 4th TA of MIMO-AFDM systems, respectively, and \{$[1,1]$, $[-1,0]$\}-step CDDS on the 2nd and the 3rd TA of MIMO-OTFS systems, respectively, to satisfy the path non-overlap condition.}. BPSK and ML optimal detector are used.

Fig. \ref{4-1} shows the BER performance of OTFS systems with different antenna configurations and transmit diversity schemes. The asymptotic lines with slopes $\rho$ of 2, 4, 6, and 8 are provided for the convenience of comparison. We can observe that the diversity orders of OTFS systems with SISO, $2 \times 1$ CDDS, $3 \times 1$ CDDS, $3 \times 1$ CDDS, and $2 \times 2$ CDDS are approximately 2, 4, 6, and 8, respectively, which are associated with Proposition \ref{propos1}. Moreover, while the BER curves of TD-CDDS-OTFS and MD-CDDS-OTFS exhibit the same diversity order, their BER values differ slightly due to the difference in the effective channel gains, as shown in TD-CDDS (\ref{eq24.01.09.2}) and MD-CDDS (\ref{eq2.14.4}). Meanwhile, the MD-CDDS-OTFS with rectangular and bi-orthogonal pulse shaping exhibit the same BER performance, which indicates that the proposed CDDS schemes have robustness to the shaping pulse adopted. Furthermore, the $2 \times 1$ CDDS-based OTFS outperforms the $2 \times 1$ Alamouti-OTFS in \cite{bb23.2.10.8} significantly.  This can be attributed to the fact that Alamouti-OTFS requires the DSC to remain unchanged in two successive frames, a condition that cannot be satisfied in high-mobility scenarios. In this case, strong inter-frame interference that is significantly greater than the AWGN occurs in Alamouti-OTFS, causing very poor BER performance that does not improve with increasing SNR.   Meanwhile, the $2 \times 1$ CDDS-based OTFS also outperforms the $2 \times 1$ single-frame (SF)-Alamouti-OTFS in \cite{bb23.2.10.8}. This is because the equivalent Alamouti codeword in SF-Alamouti-OTFS is not strictly block orthogonal, which may affect its ability to provide optimal transmit diversity gain.

\begin{figure}[tbp]
	\centering
	\includegraphics[width=0.40\textwidth,height=0.34\textwidth]{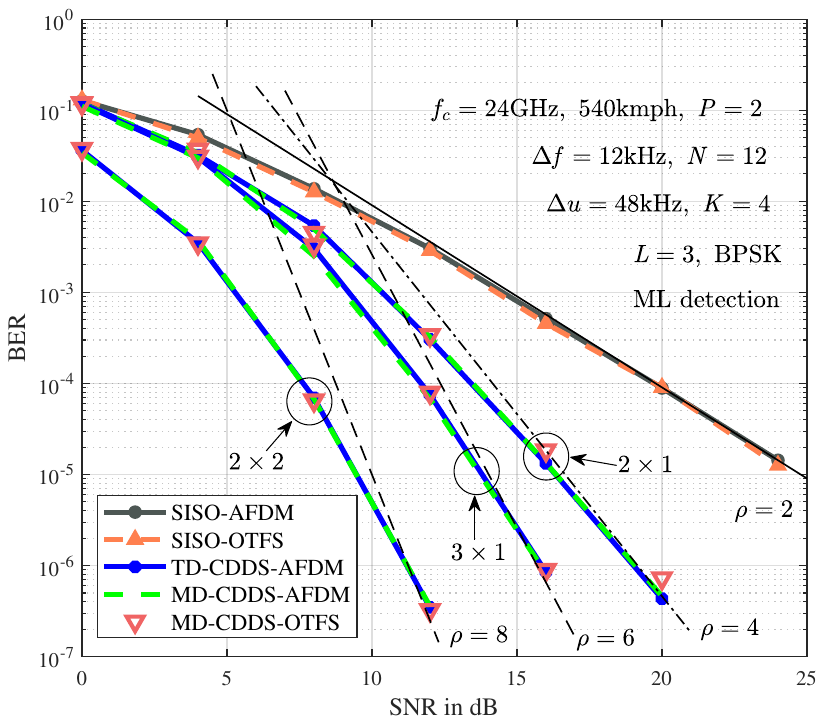}
	\caption{BER performance of AFDM systems with different transmit and receive antenna configurations, integer Doppler.}
	\label{4-1-2}
\end{figure}

\begin{figure}[tbp]
	\centering
	\includegraphics[width=0.40\textwidth,height=0.345\textwidth]{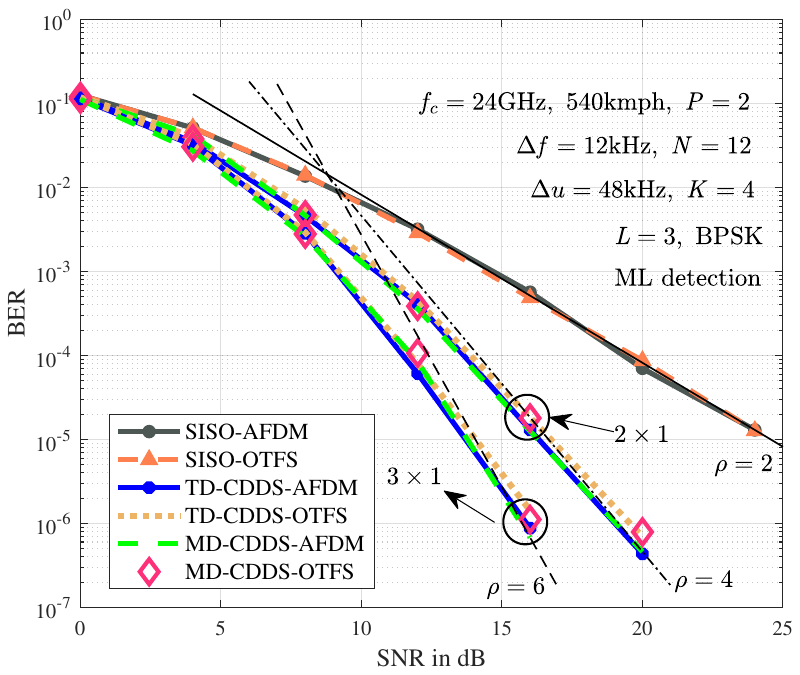}
	\caption{BER performance of OTFS and AFDM systems with  fractional Doppler, DD profile of $\{(0.2,0), (-0.9,1)\}$ is applied.}
	\label{4-2}
\end{figure}

Fig. \ref{4-1-2} shows the BER performance of AFDM systems with different antenna configurations. OTFS with SISO and MD-CDDS configurations are provided for ease of comparison. We can observe that the diversity orders of AFDM systems with SISO, $2 \times 1$ CDDS, $3 \times 1$ CDDS, $3 \times 1$ CDDS, and $2 \times 2$ CDDS are 2, 4, 6, and 8, respectively, which confirms Proposition \ref{propos1}. Moreover, we can notice from Fig.\ref{4-1} and Fig.~\ref{4-1-2} that both MD-CDDS and TD-CDDS offer optimal transmit diversity gain for MIMO-AFDM and MIMO-OTFS. This is because the extra complex exponentials of the effective channel gains induced by TD-CDDS in (\ref{eq25.05.12.3}) and MD-CDDS in (\ref{eq2.14.4}) and (\ref{eq2.16.2}) will not change the statistical characters of the original channel gains, as also demonstrated in Appendix \ref{APP1}. Additionally, Fig.~\ref{4-2} shows that similar conclusions can be drawn in DSCs with fractional Doppler shifts.

\begin{figure}[tbp]
	\centering
	\includegraphics[width=0.40\textwidth,height=0.34\textwidth]{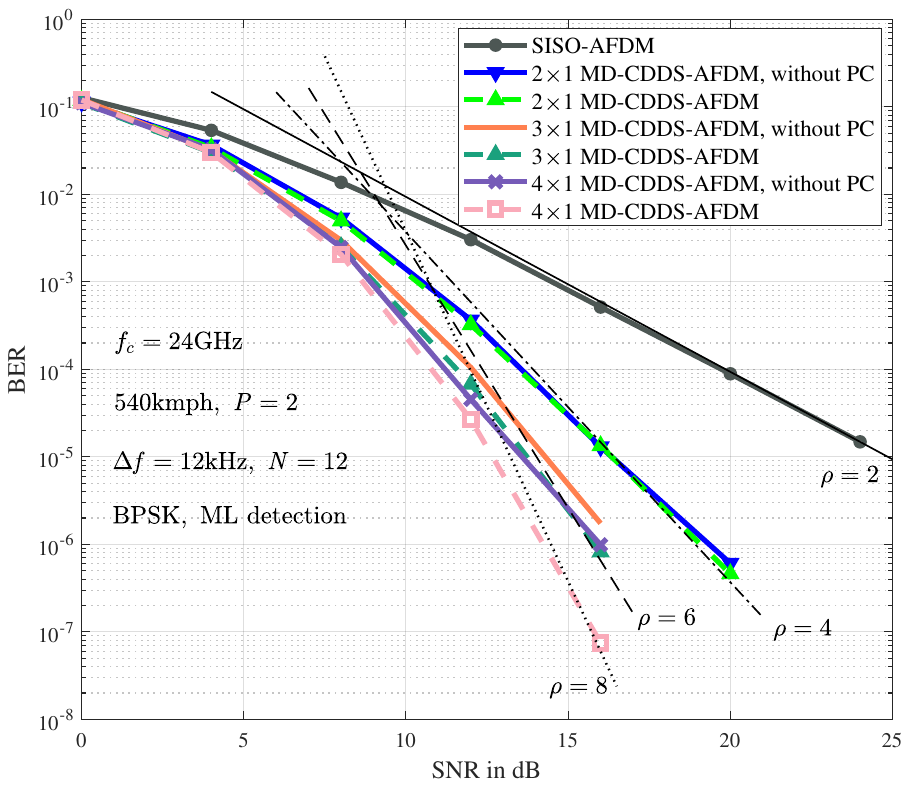}
	\caption{BER performance of MD-CDDS-AFDM with and without phase compensation, integer Doppler.}
	\label{4-11}
\end{figure}

We then investigate the influence of phase compensation operation in MD-CDDS-AFDM systems with Fig. \ref{4-11}. We can observe that MD-CDDS-AFDM systems with phase compensation deliver better BER performance than those without phase compensation. This is because, as the IOR of the MD-CDDS with phase compensation shown in (\ref{eq2.16.2}), the extra complex exponential $\mathcal{A}(l_{i}, \tilde{m}_{t})$ in the effective channel gain has no relevance to the index of DAFT-domain transmit symbol $\bar{m}$, which means all transmit DAFT-domain symbols undergo the same fading and hence enables robust optimal transmit diversity gain. By contrast,  the additional complex exponential of $\mathcal{E}(\bar{m}, \tilde{m}_{t})$ in the effective channel gain of MD-CDDS without phase compensation, as shown in (\ref{eq25.05.15.2}), is related to $\bar{m}$, which means each DAFT-domain symbol undergoes different fading and thereby induces performance loss.

\begin{figure}[tbp]
	\centering
	\includegraphics[width=0.40\textwidth,height=0.345\textwidth]{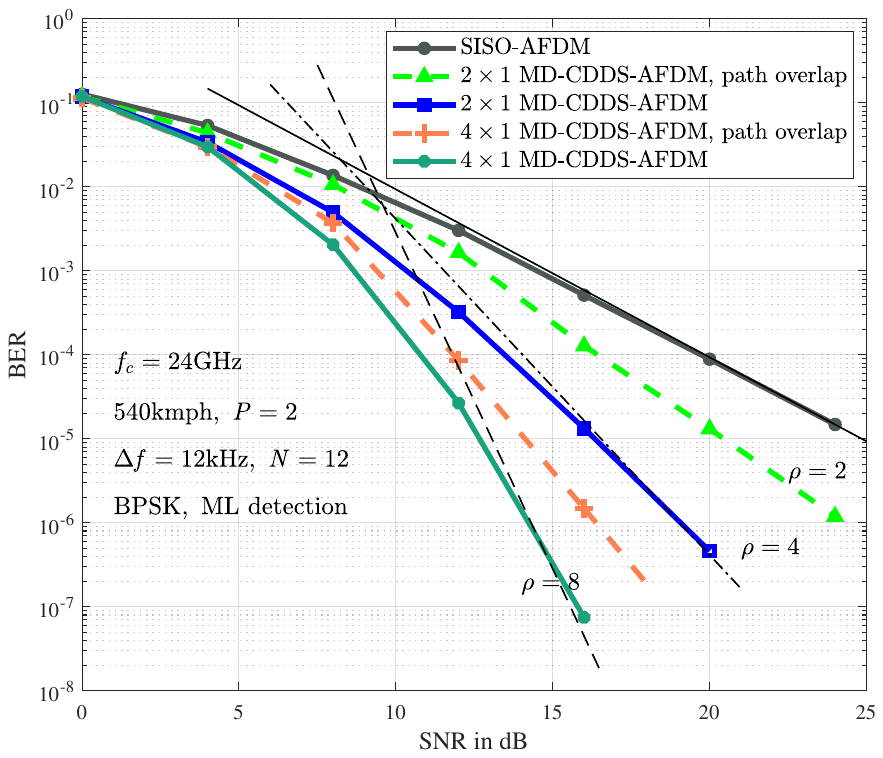}
	\caption{BER performance of CDDS-AFDM with different CDDS steps.}
	\label{4-10}
\end{figure}

We next show the BER performance of CDDS-AFDM systems with different CDDS-step selections. We can notice that the $[-1,1]$-step MD-CDDS-AFDM achieves lower diversity gain than the $[1,2]$-step MD-CDDS-AFDM, given that the former (with $\operatorname{card}(\mathbb{\tilde{P}})=3$) does not satisfy the path non-overlap condition as the latter (with $\operatorname{card}(\mathbb{\tilde{P}})=4$) does. Similar results can be obtained from the $4\times 1$ MD-CDDS-AFDM systems ($\operatorname{card}(\mathbb{\tilde{P}})=6$ and $\operatorname{card}(\mathbb{\tilde{P}})=8$ for the case with and without path overlap, respectively), which indicate that the CDDS-step selections greatly influence the achievable transmit diversity gain. As discussed at the end of Sec.~\ref{sec4-5}, it is difficult for the conventional CDD and DoDD schemes in \cite{bb23.2.10.10} to efficiently satisfy the path non-overlap condition to provide optimal transmit diversity given their fixed shifts limited to either the delay or Doppler domain. Moreover, it is worth mentioning that the CDDS-AFDM system with $[1,2]$-step CDDS typically requires more guard symbols than that with $[-1,1]$-step CDDS in EPA channel estimation for its larger maximum cyclic-delay shift, revealing the fundamental tradeoff between the channel estimation overhead and the achievable transmit diversity gain in CDDS-based systems discussed in Sec. \ref{sec4-4}.

\subsection{Imperfect CSI}
\label{Sec5.2}
We next investigate the effectiveness of the proposed CDDS scheme with estimated CSI and practical MP detector. We adopt $f_{c}=24$ GHz, $N=1024$, $\Delta f=6$ kHz, $K=32$, $L=32$, $\Delta u=192$ kHz to ensure the same TF resources are occupied by AFDM and OTFS systems. $P=2$ paths, where the delay indices are chosen randomly according to the uniform distribution among $[0, l_{\max}]$ and the Doppler indices are generated by using Jakes’ formula, i.e., $k_{i}=k_{\max } \cos \left(\theta_{i}\right)$, where $\theta_{i}$ is uniformly distributed over $[-\pi, \pi]$. $l_{\max}=4$ and $k_{\max}=3$ are set, corresponding to a maximum Doppler shift of 18 kHz and a maximum UE speed of 810 kmph. The EPA diagonal reconstruction (EPA-DR) scheme proposed in \cite{bb23.2.10.7} and the EPA scheme proposed in \cite{bb25.5.20.1} are applied to perform channel estimation in AFDM and OTFS systems, respectively. The pilot signal for SNR is represented as SNRp, while the data signal for SNR is represented as SNRd.

\begin{figure}[tbp]
	\centering
	\includegraphics[width=0.40\textwidth,height=0.33\textwidth]{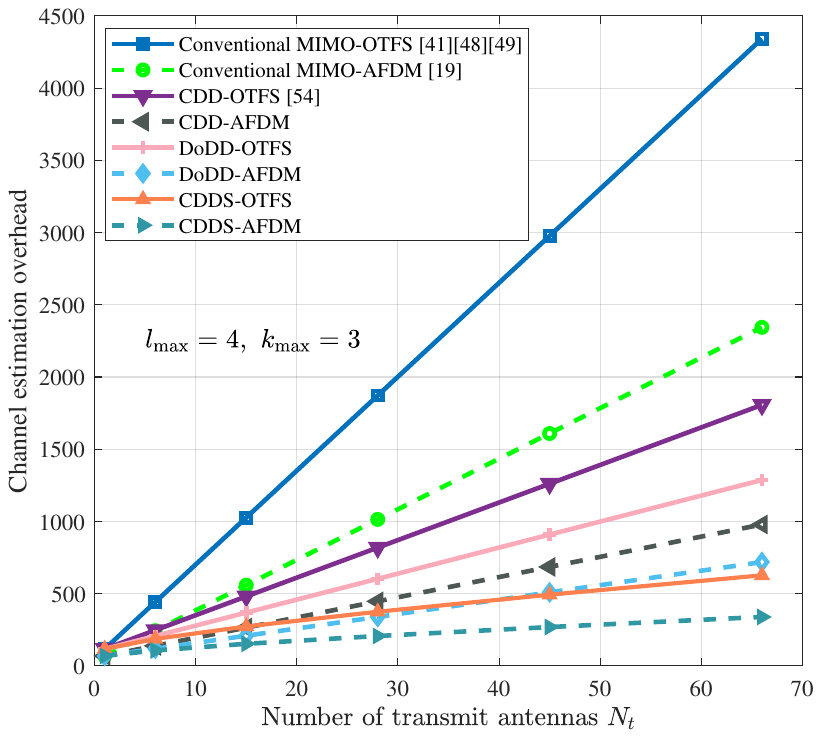}
	\caption{Comparison of channel estimation overhead between AFDM and OTFS with different transmit diversity schemes.}
	\label{5-2}
\end{figure}

Fig. \ref{5-2} compares the channel estimation overhead of AFDM and OTFS with conventional MIMO, CDD, DoDD, and the proposed CDDS configurations, which are obtained according to Table \ref{tabel7-31-1}. We can observe that the channel estimation overheads of AFDM and OTFS systems with the proposed CDDS scheme are significantly lower than those of the conventional transmit diversity schemes, and the gap between them becomes larger as the increase of the number of TAs.  This aligns well with the discussion in Sec. \ref{Sec4.2} and implies that CDDS is particularly suitable for large-scale MIMO to perform transmit diversity precoding. Moreover, we can also notice that AFDM maintains its compelling advantage of fewer channel estimation overhead over OTFS in the CDDS setting, allowing it to support a larger scale of MIMO.

\begin{figure}[tbp]
	\centering
	\includegraphics[width=0.41\textwidth,height=0.345\textwidth]{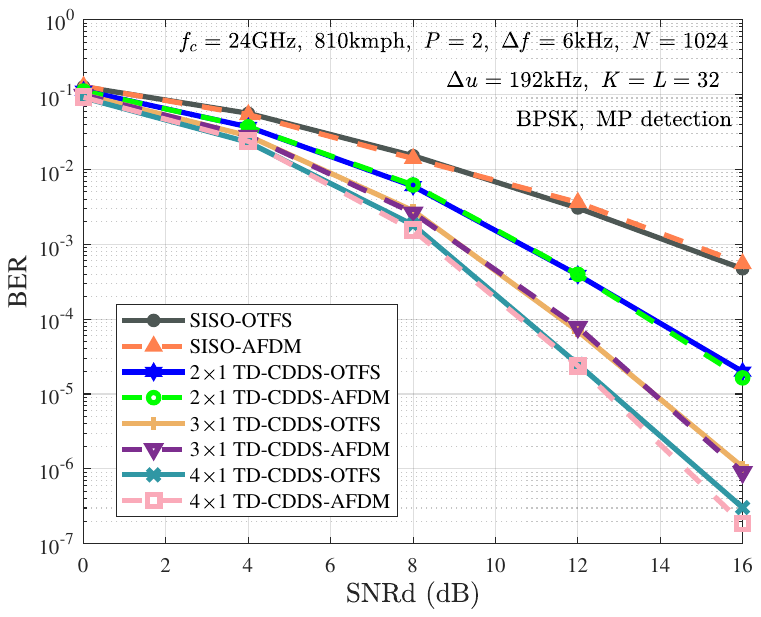}
	\caption{BER performance of TD-CDDS-based AFDM and OTFS systems with estimated CSI, SNRp = 40 dB, MP detector, integer Doppler.}
	\label{4-3}
\end{figure}

\begin{figure}[tbp]
	\centering
	\includegraphics[width=0.41\textwidth,height=0.345\textwidth]{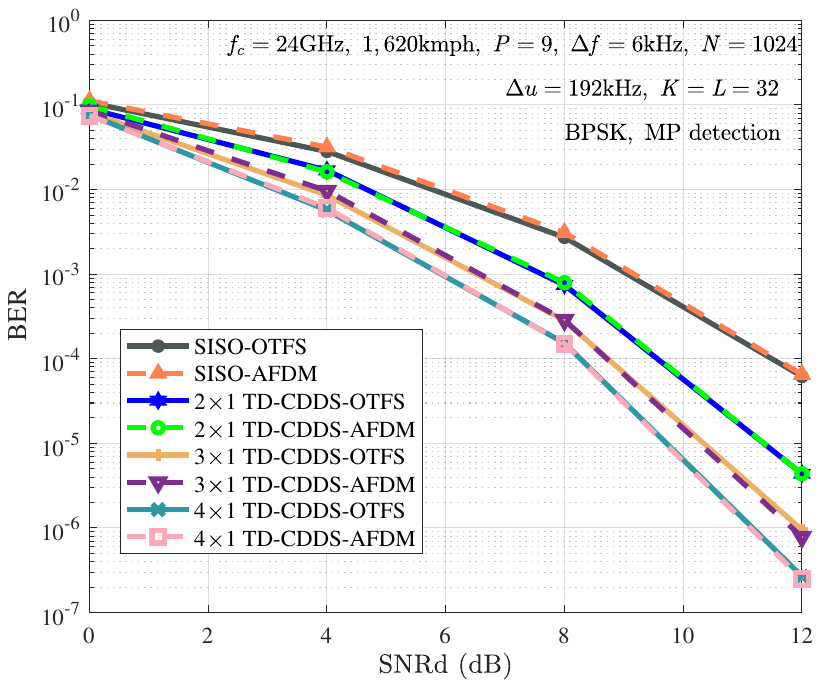}
	\caption{BER performance of TD-CDDS-based AFDM and OTFS systems with estimated CSI, SNRp = 40 dB, MP detector, integer Doppler, EVA channel model \cite{bb26.2.16.1}.}
	\label{4-4}
\end{figure}

We proceed to validate the performance of the proposed CDDS scheme under a non-optimal detector. Fig.~\ref{4-3} shows the BER performance of CDDS-based AFDM and OTFS systems with estimated CSI, SNRp = 40 dB. Only the TD-CDDS case is applied, given that the MD-CDDS case is expected to deliver the same performance. We can observe that the BER performance of AFDM and OTFS improves eminently as the number of TAs increases. This is because the MP detector can sufficiently explore the inherent diversity of AFDM and OTFS systems, and CDDS can offer them optimal transmit diversity gain, enabling ultra-reliable communications in practical scenarios. Moreover, we further validate the effectiveness of the proposed CDDS scheme under realistic Extended Vehicular A (EAV)
channel model with $P=9$, $l_{\max}=15$, $k_{\max }=6$, and a relative power profile of [0.0, -1.5, -1.4, -3.6, -0.6, -9.1, -7.0, -12.0, -16.9] (dB) in Fig.~\ref{4-4} \cite{bb26.2.16.1}, which demonstrates the robustness of the proposed scheme.

\section{Conclusion}
\label{Sec6}
In this work, a novel transmit diversity technique, named cyclic delay-Doppler shift, for MIMO-AFDM and MIMO-OTFS is presented. We demonstrate the detailed implementation of CDDS on MIMO-AFDM and MIMO-OTFS in the time domain and modulation domain by deriving the corresponding CDDS precoding matrices, which are nothing but sparse phase-compensated permutation matrices. It is shown that the proposed CDDS scheme can offer MIMO-AFDM and MIMO-OTFS with optimal transmit diversity gain when the CDDS step is chosen to satisfy the path non-overlap condition, which gives rise to a fundamental tradeoff between the channel estimation overhead and the achievable transmit diversity gain. Compared to the conventional Alamouti scheme, the proposed scheme is more flexible and easier to implement, showing great potential in enabling ultra-reliable and low-latency communications in high-mobility scenarios. Future work will focus on integrating channel prediction into CDDS to fully explore its potential.

{\appendices
	\section{Proof of Proposition \ref{propos1}}
	\label{APP1}
	We provide the proof of the AFDM case, where the OTFS case can be obtained similarly by following the same procedures. Consider the equivalent SISO system between the transmitter with CDDS and the $r$th RA, then the received DAFT-domain vector can be denoted as 
	\begin{align}
		\mathbf{y}_{r} = \sum_{t=1}^{N_{t}}\sum_{i=1}^{P}\bar{h}_{i}^{[t]} \mathbf{\bar{H}}_{i}^{[t]} \mathbf{x}+\mathbf{w}_{r}=\boldsymbol{\Phi}(\mathbf{x}) \mathbf{h}+\mathbf{w}_{r},
		\label{eq25.05.17.1}
	\end{align}
	where $\bar{h}_{i}^{[t]}$ and $\mathbf{\bar{H}}_{i}^{[t]}$ are the effective channel gain and effective modulation-domain subchannel matrix between the $t$th TA  and the $r$th RA after CDDS (for TD-CDDS-AFDM, $\bar{h}_{i}^{[t]}=\bar{h}_{i}^{[r,t]}$ and $\mathbf{\bar{H}}_{i}^{[t]}=\mathbf{A}\mathbf{\tilde{\bar{{H}}}}_{i}^{[r,t]}\mathbf{A}^{H}$, while for MD-CDDS-AFDM, $\bar{h}_{i}^{[t]}=\bar{h}_{i,\text{AFDM}}^{[r,t]}$ and $\mathbf{\bar{H}}_{i}^{[t]}=\mathbf{H}_{i}^{\text{AFDM}}\mathbf{C}_{\text{MD-CDDS}}^{[\tilde{k}_{t},\tilde{l}_{t}],\text{AFDM}}$), and
	\begin{align}
		&\mathbf{\Phi}(\mathbf{x}) =  \notag \\
		&\left[
		\mathbf{\bar{H}}_{1}^{[1]} \mathbf{x},  \ldots,   \mathbf{\bar{H}}_{P}^{[1]} \mathbf{x},  \dots,  \mathbf{\bar{H}}_{1}^{[N_{t}]} \mathbf{x},  \ldots,  \mathbf{\bar{H}}_{P}^{[N_{t}]} \mathbf{x}
		\right]\in \mathbb{C}^{N \times N_{t}P},
		\label{eq25.05.17.2}
	\end{align}
	channel gain vector 
	\begin{align}
		\mathbf{h} = \left[\bar{h}_{1}^{[1]}, \ldots, \bar{h}_{P}^{[1]},\dots, \bar{h}_{1}^{[N_{t}]}, \dots, \bar{h}_{P}^{[N_{t}]}\right]^{T} \in \mathbb{C}^{N_{t}P\times 1}.
		\label{eq25.05.17.3}
	\end{align}
	Assume $h_{i}^{[r,t]}$ to be i.i.d and distributed as $\mathcal{C N}(0,1 / N_{t}P)$, then $\bar{h}_{i}^{[t]}$ is also i.i.d and distributed as $\mathcal{C N}(0,1 / N_{t}P)$ given that both $\bar{h}_{i}^{[r,t]}$ and $\bar{h}_{i,\text{AFDM}}^{[r,t]}$ are obtained by multiplying $\bar{h}_{i}^{[r,t]}$ with a complex exponential. Moreover, we normalize the transmit vector $\mathbf{x}$ so that the SNR at each receive antenna is $\frac{1}{N_{0}}$. 
	
	Let $\mathbf{x}$ and $\mathbf{\hat{x}}$ be two transmit symbol matrices. Assuming perfect CSI and maximum likelihood (ML) detection at the receiver, the probability of transmitting the symbol matrix $\mathbf{x}$ and deciding in favor of $\mathbf{\hat{x}}$ at the receiver is the conditional PEP between $\mathbf{x}$ and $\mathbf{\hat{x}}$, which can be expressed as 
	\begin{equation}
		P\left(\mathbf{x} \rightarrow \mathbf{\hat{x}} \mid \mathbf{\tilde{h}}, \mathbf{x}\right)=Q\left(\sqrt{\frac{\left\|\left(
				\mathbf{\Phi}(\mathbf{x})-\mathbf{\Phi}(\mathbf{\hat{x}})\right)\mathbf{h}\right\|^{2}}{2 N_{0}}}\right).
	\end{equation}
	where $\left\|\left(
	\mathbf{\Phi}(\mathbf{x})-\mathbf{\Phi}(\mathbf{\hat{x}})\right)\mathbf{h}\right\|^{2}
	=\mathbf{h}^{H} \mathbf{\Omega}\mathbf{h}$,
	\begin{align}
		\mathbf{\Omega} = \left(
		\mathbf{\Phi}(\mathbf{x})-\mathbf{\Phi}(\mathbf{\hat{x}})\right)^{H}\left(
		\mathbf{\Phi}(\mathbf{x})-\mathbf{\Phi}(\mathbf{\hat{x}})\right)  = \mathbf{U} \boldsymbol{\hat{\Lambda}} \mathbf{U}^{H},
		\label{eq25.05.17.4}
	\end{align}
	$\mathbf{U}$ is a unitary matrix and $\boldsymbol{\hat{\Lambda}}=\operatorname{diag}\left(\lambda_{1}^{2}, \dots, \lambda_{N_{t}P}^{2}\right)$ with $\lambda_{i}$ being the $i$th singular value of the difference matrix $\boldsymbol{\delta}=\mathbf{\Phi}(\mathbf{x})-\mathbf{\Phi}(\mathbf{\hat{x}})$.  Consequently, we have
	\begin{align}
		\mathbb{E}_{\mathbf{h}}\{\left\|\left(
		\mathbf{\Phi}(\mathbf{x})-\mathbf{\Phi}(\mathbf{\hat{x}})\right)\mathbf{h}\right\|^{2}\}=\sum_{i=1}^{\theta}\frac{\lambda_{i}^{2}}{N_{t}P},
		\label{eq25.05.17.7}
	\end{align}
	where $\theta$ is the rank of $\boldsymbol{\delta}$.
	Considering $
	Q(x) \cong  \frac{1}{12} e^{-x^{2} / 2}+\frac{1}{4} e^{-2 x^{2} / 3}$ and $e^{-x}\leq \frac{1}{1+x}$
	, an upper bound of the unconditional PEP can be obtained as
	\begin{align}
		P\left(\mathbf{x} \rightarrow \mathbf{\hat{x}}\right)\leq \frac{1}{12}\prod_{i=1}^{\theta}\frac{1}{1+\frac{\lambda_{i}^{2}}{4N_{0}N_{t}P}}
		+\frac{1}{4}\prod_{i=1}^{\theta}\frac{1}{1+\frac{\lambda_{i}^{2}}{3N_{0}N_{t}P}}.
		\label{eq25.05.17.8}
	\end{align}
	At high SNR regime, (\ref{eq25.05.17.8}) can be further simplified as
	\begin{align}
		P\left(\mathbf{x} \rightarrow \mathbf{\hat{x}}\right)\leq \left(\prod_{i=1}^{\theta} \frac{\lambda_{i}^{2}}{N_{t}P} \right)^{-1}\left(\frac{4^{\theta}+3^{\theta+1}}{12}\right)\left(\frac{1}{N_{0}}\right)^{\theta}.
		\label{eq25.05.17.9}
	\end{align}
We can observe from the right term of (\ref{eq25.05.17.9}) that the diversity order of the equivalent SISO system is $ \theta$. According to Remark \ref{Remark1}, the diversity order of the SISO-AFDM system is $P$, indicating that $\theta$ is equal to the number of separable paths \cite{bb23.1.3.4}. Insightfully, when (\ref{eq25.05.13.3}) is satisfied, we have
$\operatorname{card}(\mathbb{\tilde{P}})=N_{t}P$, hence $\theta=N_{t}P$, i.e., optimal transmit diversity gain $N_{t}$ can be acquired. This completes the proof of Proposition \ref{propos1}.
}

\end{document}